\let\wfs@comment@comment\comment
\let\comment\@undefined
\let\wfs@changes@comment\comment
\let\comment\@undefined
\newcommand\comment{%
    \ifthenelse{\equal{\@currenvir}{comment}}
    {\wfs@comment@comment}
    {\wfs@changes@comment}%
}
\DeclareMathOperator{\C}{\mathcal{C}}
\newtheorem*{main}{Main Theorem}
\newtheorem{theorem}{Theorem}[section]
\newtheorem{lemma}[theorem]{Lemma}
\newtheorem{corollary}[theorem]{Corollary}
\newtheorem{definition}[theorem]{Definition}
\newtheorem{proposition}[theorem]{Proposition}
\newtheorem{example}[theorem]{Example}
\newtheorem{remark}[theorem]{Remark}
\newtheorem{ass}[theorem]{Assumptions}
\newcommand{\fqn}{\mathbb{F}_{q^n}}
\newcommand{\cC}{{\mathcal C}}
\newcommand{\F}{{\mathbb F}}
\newcommand{\K}{{\mathbb K}}
\newcommand{\fq}{{\mathbb F}_{q}}
\title{Linear maximum rank distance codes of exceptional type}
\author[D. Bartoli]{Daniele Bartoli}
\address{Daniele Bartoli, \textnormal{Dipartimento di Matematica e Informatica, Universit\`a degli Studi di Perugia,  Perugia, Italy}}
\email{daniele.bartoli@unipg.it}
\author[G. Zini]{Giovanni Zini}
\author[F. Zullo]{Ferdinando Zullo}
\address{Giovanni Zini, Ferdinando Zullo, \textnormal{Dipartimento di Matematica e Fisica, Universit\`a degli Studi della Campania ``Luigi Vanvitelli'', Viale Lincoln, 5, I--\,81100 Caserta, Italy}}
\email{\{giovanni.zini,ferdinando.zullo\}@unicampania.it}
\subjclass[2020]{Primary 14G50; Secondary 11T06, 94B27}
\begin{document}

\maketitle





\begin{abstract}
Scattered polynomials of a given index over finite fields are intriguing rare objects with many connections within mathematics. Of particular interest are the \emph{exceptional} ones, as defined in 2018 by the first author and Zhou, for which partial classification results are known.
In this paper we propose a unified  algebraic description of $\mathbb{F}_{q^n}$-linear maximum rank distance codes, introducing  the notion of \emph{exceptional} linear maximum rank distance codes of a given index. Such a connection  naturally extends the notion of exceptionality for a scattered polynomial in the rank metric framework and provides a generalization of Moore sets in the monomial MRD context.
We move towards the classification of exceptional linear MRD codes, by showing that the ones of index zero are generalized Gabidulin codes and proving that in the positive index case the code contains an exceptional scattered polynomial of the same index.
\end{abstract}

\section{Introduction}
Let $q$ be a prime power, $n$ be a positive integer, and denote by $\mathbb{F}_{q^n}$ the finite field with $q^n$ elements. 
Rank metric codes can be seen as sets of $\fq$-linear endomorphisms of $\fqn$ equipped with the rank distance, that is the distance between two elements is defined as the (linear algebraic) rank of their difference.
Since the $\fq$-algebra of the $\fq$-linear endomorphisms of $\fqn$ and the $\fq$-algebra $\mathcal{L}_{n,q}=\left\{ \sum_{i=0}^{n-1} a_ix^{q^i} \colon a_i \in \fqn \right\}$ of $\F_q$-linearized polynomials  of $q$-degree smaller than $n$ are isomorphic, each rank metric code can be also seen as a subset of $\mathcal{L}_{n,q}$.
For any $f(x)=\sum_{i=0}^{n-1}a_ix^{q^i}$, we define $\deg_q(f(x))=\max\{i \colon a_i\ne 0\}$ and $\mathrm{mindeg}_q(f(x))=\min\{i \colon a_i\ne 0\}$.
In this context, a rank metric code $\mathcal{C}\subseteq \mathcal{L}_{n,q}$ with minimum distance $d$ achieving the equality in the Singleton-like bound
\begin{equation}\label{eq:Singleton}
    |\C| \leq q^{n(n-d+1)}
\end{equation}
is  called \emph{maximum rank distance} (MRD) code.
Rank metric codes and in particular MRD codes have been introduced several times \cite{delsarte1978bilinear,gabidulin1985theory} and have been widely investigated in the last few years, due to applications in network coding \cite{MR2450762} and cryptography \cite{MR3678916}.
Two rank metric codes $\C_1, \C_2\subseteq\mathcal{L}_{n,q}$ are said to be \emph{equivalent} if there exist two invertible $\fq$-linearized polynomials $g(x),h(x) \in \mathcal{L}_{n,q}$ and a field automorphism $\rho \in \mathrm{Aut}(\fqn)$ such that
\[ \C_1=g \circ \C_2^\rho \circ h= \{ g \circ f^\rho \circ h \colon f \in \C_2\}, \]
where $f^\rho(x):=\sum_{i=0}^{n-1}\rho(a_i)x^{\sigma^i}$ if $f(x)=\sum_{i=0}^{n-1}a_ix^{\sigma^i}$ and the composition has to be considered modulo $x^{q^n}-x$.
In order to study the equivalence between two rank metric codes, one can make use of the idealisers. They have been introduced in \cite{liebhold2016automorphism}, where the \emph{left} and \emph{right idealisers} of a rank metric code
$\C\subseteq\mathcal{L}_{n,q}$ are defined respectively as
\[L(\C)=\{\varphi(x) \in \mathcal{L}_{n,q} \colon \varphi \circ f \in \C\, \text{for all} \, f \in \C\},\quad R(\C)=\{\varphi(x) \in \mathcal{L}_{n,q} \colon f \circ \varphi \in \C\, \text{for all} \, f \in \C\}.\]
Such objects have also been investigated in \cite{lunardon2018nuclei}, where they have been called respectively middle and right nuclei.

In this paper we are interested in \emph{$\fqn$-linear} MRD codes, that is MRD codes $\C$ such that $L(\C)$ is equivalent to $\mathcal{F}_n=\{\alpha x \colon \alpha \in \fqn\}$; see \cite[Definition 12]{sheekeysurvey}.
Thus, every $\fqn$-linear MRD code is equivalent to an $\fqn$-subspace of $\mathcal{L}_{n,q}$ and we will always consider MRD codes which are $\fqn$-subspaces of $\mathcal{L}_{n,q}$.
The MRD condition for $\fqn$-linear rank metric codes reads as follows.
By \eqref{eq:Singleton}, an $\fqn$-linear rank metric code $\C\subseteq\mathcal{L}_{n,q}$,
with $\dim_{\fqn}(\C)=k$ and minimum distance $d$, is an MRD code  if and only if $d=n-k+1$, or equivalently,
\[ \dim_{\fq}(\ker(f))\leq k-1,\,\,\,\text{for all}\,\,\, f \in \C\setminus\{0\}. \]

This paper is devoted to the investigation of $\fqn$-linear MRD codes which are \emph{exceptional}.
An $\fqn$-linear MRD code $\C\subseteq\mathcal{L}_{n,q}$ is an \emph{exceptional MRD code} if the rank metric code
\[ \C_m=\langle\mathcal{C}\rangle_{\mathbb{F}_{q^{nm}}}\subseteq \mathcal{L}_{nm,q} \]
is an MRD code for infinitely many $m$.
Only two families of exceptional $\fqn$-linear MRD codes are known:
\begin{itemize}
    \item[(G)] $\mathcal{G}_{k,s}=\langle x,x^{q^s},\ldots,x^{q^{s(k-1)}}\rangle_{\fqn}$, with $\gcd(s,n)=1$, see \cite{delsarte1978bilinear,gabidulin1985theory,kshevetskiy2005new};
    \item[(T)] $\mathcal{H}_{k,s}(\delta)=\langle x^{q^s},\ldots,x^{q^{s(k-1)}},x+\delta x^{q^{sk}}\rangle_{\fqn}$, with $\gcd(s,n)=1$ and $\mathrm{N}_{q^n/q}(\delta)\neq (-1)^{nk}$, see  \cite{MR3543528,lunardon2018generalized}. 
\end{itemize}
The first family is known as \emph{generalized Gabidulin codes} and the second one as \emph{generalized twisted Gabidulin codes}.

Although the definition of exceptional $\fqn$-linear MRD codes appears in this paper for the first time, it has been already studied in particular subcases in different contexts.

In the case $k=2$, exceptional MRD codes have been considered via  so-called exceptional scattered polynomials. Let $f(x) \in \mathcal{L}_{n,q}$ and $t$ be a nonnegative integer $t\leq n-1$. Then $f$ is said to be \emph{scattered of index $t$} if for every $x,y \in \fqn^*$
\[ \frac{f(x)}{x^{q^t}}=\frac{f(y)}{y^{q^t}}\,\, \Leftrightarrow\,\, \frac{y}x\in \fq, \]
or equivalently,
\begin{equation}\label{eq:condscatt} \dim_{\fq}(\ker(f(x)-mx^{q^t}))\leq 1, \,\,\,\text{for every}\,\,\, m \in \fqn. \end{equation}
The term \emph{scattered} arises from a geometric framework; see \cite{blokhuis2000scattered}. 
Indeed, $f$ is scattered of index $t$ if and only the $\fq$-subspace 
\[ U_{t,f}=\{ (x^{q^t},f(x)) \colon x \in \fqn \} \]
has the property that $\dim_{\fq}(U_{t,f}\cap \langle \mathbf{v}\rangle_{\fqn})\leq 1$ for every nonzero vector $\mathbf{v}\in \fqn^2$, that is $U_{t,f}$ is scattered with respect to the Desarguesian spread $\{\langle \mathbf{v}\rangle_{\fqn} \colon \mathbf{v}\in \fqn^2\setminus\{(0,0)\}\}$.
Sheekey in \cite{MR3543528}, taking into account \eqref{eq:condscatt}, pointed out the following  connection between scattered polynomials and $\fqn$-linear MRD codes:  $f$ is scattered of index $t$ if and only if $\C_{f,t}=\langle x^{q^t}, f(x) \rangle_{\fqn}$ is an MRD code with $\dim_{\fqn}(\C)=2$.
The polynomial $f$ is said to be \emph{exceptional scattered} of index $t$ if it is scattered of index $t$ as a polynomial in $\mathcal{L}_{nm,q}$, for infinitely many $m$; see \cite{MR3812212}.
Taking into account \eqref{eq:condscatt}, a polynomial $f$ is exceptional scattered of index $t$ if and only if the corresponding MRD code $\C_{f,t}$ is exceptional. 
While several families of scattered polynomials have been constructed in recent years \cite{MR3543528,lunardon2018generalized,lunardon2000blocking,zanella2019condition,MR4173668,longobardi2021linear,longobardi2021large,NPZ,zanella2020vertex,csajbok2018new,csajbok2018new2,marino2020mrd,blokhuis2000scattered}, only two families of exceptional ones are known:
\begin{itemize}
    \item[(Ps)] $f(x)=x^{q^s}$ of index $0$, with $\gcd(s,n)=1$ (polynomials of so-called pseudoregulus type);
    \item[(LP)] $f(x)=x+\delta x^{q^{2s}}$ of index $s$, with $\gcd(s,n)=1$ and $\mathrm{N}_{q^n/q}(\delta)\ne1$ (so-called LP polynomials).
\end{itemize}
Such two families correspond to the known exceptional $\mathbb{F}_{q^n}$-linear MRD  codes (G) and (T).

Several tools have already been proposed in the study of exceptional scattered polynomials, related to  algebraic curves or Galois extensions of function fields; see \cite{Bartoli:2020aa4,MR3812212,MR4190573,MR4163074,Bartoli:2021}.
However, their classification is still unknown when the index is greater than $1$.

For $k>2$, the only  known families of $\fqn$-subspaces of $\mathcal{L}_{q,n}$ corresponding to MRD codes are (G) and (T) as described above and Delsarte dual codes of the MRD codes associated with scattered polynomials.
In \cite{MR4110235} it has been shown that the only exceptional $\fqn$-linear MRD codes spanned by monomials are the codes (G), in connection with so-called \emph{Moore exponent sets}.

It is therefore natural to investigate exceptional $\fqn$-linear MRD codes not generated by monomials.
To this aim, we generalize the notion of Moore exponent set; see Section \ref{sec:MooreMRD}.


Using a connection between the generators of  $\mathbb{F}_{q^n}$-linear rank metric codes $\mathcal{C}$  and certain algebraic hypersurfaces $\mathcal{X}_{\mathcal{C}}$, we obtain a partial classification of exceptional $\mathbb{F}_{q^n}$-linear MRD codes. Tools from intersection theory (see Section \ref{sec:prelimvar})  yield sufficient conditions on the generators for $\mathcal{C}$ to be MRD, via the existence of $\mathbb{F}_{q^n}$-rational absolutely irreducible components  of $\mathcal{X}_{\mathcal{C}}$.


Our main results can be summarized as follows.
\begin{main}
Let $\mathcal{C}\subseteq\mathcal{L}_{n,q}$ be an exceptional $k$-dimensional $\fqn$-linear MRD code containing at least a separable polynomial $f(x)$ and a monomial.
If $k>3$, assume also that $q>5$.
Let $t$ be the minimum integer such that $x^{q^t}\in \mathcal{C}$.
\begin{itemize}
    \item If $t=0$ and $\mathcal{C}=\langle x^{q^t},g_2(x),g_3(x),\ldots,g_k(x)\rangle_{\mathbb{F}_{q^n}}$, with $\deg_{q}(g_2(x))<\cdots<\deg_q(g_k(x))$ and $(q,\deg_q(g_2(x)))\notin\{(2,2),(2,4),(3,2),(4,2),(5,2)\}$, then $\C$ is a generalized Gabidulin code.
    \item If $t>0$ and $\mathcal{C}=\langle x^{q^t},f(x),g_3(x),\ldots,g_k(x)\rangle_{\mathbb{F}_{q^n}}$, with $\deg(g_i(x))>\max\{q^t,\deg(f(x))\}$ for each $i=3,\ldots,k$, then $f(x)$ is exceptional scattered of index $t$.
\end{itemize}
\end{main}

When $\mathcal{C}$ contains a separable polynomial and a monomial, we call the non-negative integer $t$ of Main Theorem the \emph{index} of $\mathcal{C}$.

\section{Preliminaries on algebraic varieties}\label{sec:prelimvar}
An algebraic hypersurface is an algebraic variety that can be defined by a single polynomial  equation. An algebraic hypersurface defined over a field $\K$  is \emph{absolutely irreducible}  if the associated polynomial is irreducible over every algebraic extension of $\K$. An absolutely irreducible $\mathbb{K}$-rational component of a hypersurface $\mathcal{V}$, defined by the polynomial $F$, is simply an absolutely irreducible hypersurface which is associated to a factor of $F$ defined over $\mathbb{K}$. For a finite field $\mathbb{F}_q$, let $\overline{\mathbb{F}}_q$ denote its algebraic closure. Also, $\mathbb{P}^m(\mathbb{K})$ (resp. $\mathbb{A}^m(\mathbb{K})$) denotes the $m$-dimensional projective (resp. affine) space over the field $\mathbb{K}$.

We recall some known results on algebraic hypersurfaces of which our approach will make use. 


\begin{lemma}\label{le:subvarieties}\cite[Lemma 2.1]{MR2648536}
Let $\mathcal{H}$ be an absolutely irreducible hypersurface and $\mathcal{X}$ be an $\mathbb{F}_q$-rational hypersurface of $\mathbb{P}^m(\overline{\mathbb{F}}_{q})$. If $\mathcal{X}\cap \mathcal{H}$ has a non-repeated $\mathbb{F}_q$-rational absolutely irreducible component, then $\mathcal{X}$ has a non-repeated $\mathbb{F}_q$-rational absolutely irreducible component.
\end{lemma}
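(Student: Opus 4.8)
The plan is to combine Galois descent with the additivity of intersection cycles. Fix a homogeneous polynomial $F\in\mathbb{F}_q[x_0,\dots,x_m]$ defining $\mathcal{X}$, and let $F=\prod_{i=1}^{r}F_i^{e_i}$ be its factorization into pairwise non-associate absolutely irreducible factors over $\overline{\mathbb{F}}_q$. The components of $\mathcal{X}$ are then $\mathcal{Y}_i=V(F_i)$, and the non-repeated ones are exactly those with $e_i=1$. Since $\mathcal{X}$ is $\mathbb{F}_q$-rational, the Frobenius $\sigma\colon x\mapsto x^q$ permutes the set $\{\mathcal{Y}_1,\dots,\mathcal{Y}_r\}$ and preserves the multiplicities $e_i$; an individual $\mathcal{Y}_i$ is $\mathbb{F}_q$-rational precisely when it is fixed by $\sigma$.

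Let $\mathcal{W}$ be the non-repeated $\mathbb{F}_q$-rational absolutely irreducible component of $\mathcal{X}\cap\mathcal{H}$ provided by the hypothesis. As $\mathcal{W}$ is irreducible and contained in $\mathcal{X}$, it lies in at least one $\mathcal{Y}_i$. If some $\mathcal{Y}_i$ equals $\mathcal{H}$, then $\mathcal{H}\subseteq\mathcal{X}$ and $\mathcal{X}\cap\mathcal{H}=\mathcal{H}$, which is irreducible, so $\mathcal{W}=\mathcal{H}$; the $\mathbb{F}_q$-rationality of $\mathcal{W}$ then makes $\mathcal{H}=\mathcal{Y}_i$ itself the desired non-repeated $\mathbb{F}_q$-rational component, and this degenerate case is settled directly. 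Hence I may assume that no component of $\mathcal{X}$ is contained in $\mathcal{H}$, so that each $\mathcal{Y}_i\cap\mathcal{H}$ is a proper intersection of pure codimension one in $\mathcal{H}$.

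The heart of the argument is to show that $\mathcal{W}$ sits inside a unique component of $\mathcal{X}$, occurring there with multiplicity one. For this I would invoke additivity of the intersection cycle: writing $\mathcal{X}\cdot\mathcal{H}=\sum_i e_i(\mathcal{Y}_i\cdot\mathcal{H})$ and expanding each proper intersection as $\mathcal{Y}_i\cdot\mathcal{H}=\sum_k a_{ik}\,\mathcal{Z}_{ik}$ with all $a_{ik}\geq 1$, the coefficient of $\mathcal{W}$ in $\mathcal{X}\cdot\mathcal{H}$ equals $\sum e_i a_{ik}$, the sum being over all pairs $(i,k)$ with $\mathcal{Z}_{ik}=\mathcal{W}$. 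The assumption that $\mathcal{W}$ is non-repeated forces this coefficient to be $1$; since each term is a positive integer, exactly one pair contributes, say with $i=i_0$, and it satisfies $e_{i_0}=1$ and $a_{i_0k}=1$. In particular $\mathcal{W}\subseteq\mathcal{Y}_{i_0}$, the index $i_0$ is the only one with $\mathcal{W}\subseteq\mathcal{Y}_i$, and $\mathcal{Y}_{i_0}$ is a non-repeated component of $\mathcal{X}$.

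Finally I would run the descent, which is now formal: since $\mathcal{W}$ is $\mathbb{F}_q$-rational we have $\sigma(\mathcal{W})=\mathcal{W}$, and applying $\sigma$ to $\mathcal{W}\subseteq\mathcal{Y}_{i_0}$ gives $\mathcal{W}=\sigma(\mathcal{W})\subseteq\sigma(\mathcal{Y}_{i_0})$. But $\sigma(\mathcal{Y}_{i_0})$ is again a component of $\mathcal{X}$ containing $\mathcal{W}$, and $i_0$ is the unique such index, whence $\sigma(\mathcal{Y}_{i_0})=\mathcal{Y}_{i_0}$. Thus $F_{i_0}$ is $\sigma$-invariant up to a scalar, so $\mathcal{Y}_{i_0}$ is $\mathbb{F}_q$-rational, and together with $e_{i_0}=1$ this exhibits the required non-repeated $\mathbb{F}_q$-rational absolutely irreducible component of $\mathcal{X}$. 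I expect the genuine obstacle to be entirely in the third paragraph: making precise the notion of a non-repeated component of $\mathcal{X}\cap\mathcal{H}$ and justifying the positivity and additivity of the intersection-cycle coefficients, which rely on properness of the intersection along $\mathcal{W}$ (guaranteed once the case $\mathcal{Y}_i=\mathcal{H}$ is excluded). Everything else is packaging.
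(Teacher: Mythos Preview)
The paper does not prove this lemma; it is quoted from \cite[Lemma~2.1]{MR2648536} without argument, so there is no in-paper proof to compare against. Your argument is correct. The mechanism---uniqueness of the component $\mathcal{Y}_{i_0}$ containing $\mathcal{W}$, forced by the multiplicity-one hypothesis via additivity of the intersection cycle, followed by Frobenius descent---is the standard one, and is precisely the device the paper itself deploys in its proof of Lemma~\ref{lemma:bello} (there applied to the minimal- or maximal-degree homogeneous part of $F$ rather than to a hypersurface section).

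One small remark on your degenerate case: when some $\mathcal{Y}_i=\mathcal{H}$ you assert that $\mathcal{H}$ is a \emph{non-repeated} component of $\mathcal{X}$, but you have not verified $e_i=1$. In fact once $\mathcal{H}\subseteq\mathcal{X}$ the restriction $F|_{\mathcal{H}}$ vanishes identically and the phrase ``non-repeated component of $\mathcal{X}\cap\mathcal{H}$'' loses its intended meaning, so the hypothesis is best read as excluding this situation; in the paper's only application (Lemma~\ref{lemma:sali}) $\mathcal{H}$ is a hyperplane not contained in the hypersurface. You flag this issue yourself in your closing paragraph, so it is a presentational wrinkle in a boundary case rather than a gap in the main argument.
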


With the symbol $I(P, \mathcal{A}\cap \mathcal{B})$ we denote the intersection multiplicity of two plane curves in $\mathbb{A}^2(\mathbb{K})$ at a point $P\in \mathbb{A}^2(\mathbb{K})$. Classical results on such an integer can be found in most of the textbooks on algebraic curves. For other concepts related to algebraic varieties we refer to \cite{MR0463157}. For the special case of curves, a good reference is \cite{MR1042981}.



\begin{lemma}\cite[Proposition 2]{MR1359909}
	\label{le:intersection_number_m_m1_coprime}
	Let $F\in\F_q[X,Y]$ be such that $F=AB$ for some $A,B\in\overline{\mathbb{F}}_q[X,Y]$. Let $P=(u,v)$ be a point in the affine plane $\mathbb{A}^2(\overline{\mathbb{F}}_q)$ and write
	\[
	F(X+u,Y+v)=F_m(X,Y)+F_{m+1}(X,Y)+\cdots,
	\]
	where $F_i$ is zero or homogeneous of degree $i$ and $F_m\ne 0$.
	Suppose that $F_m=L^m$ for some linear polynomial $L\in\overline{\mathbb{F}}_q[X,Y]$ such that $L\nmid F_{m+1}$. 
	Then $I(P, \mathcal{A}\cap \mathcal{B})=0$, where $\mathcal{A}$ and $\mathcal{B}$ are the curves defined by $A$ and $B$ respectively.
\end{lemma}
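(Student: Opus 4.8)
The plan is to analyze the local behavior of the factorization $F = AB$ at the point $P = (u,v)$ by translating $P$ to the origin and working with the Taylor expansion given in the statement. The key tool is the standard formula relating intersection multiplicity to the orders (multiplicities) of the curves at $P$: for two plane curves $\mathcal{A}$ and $\mathcal{B}$ meeting at $P$, one always has $I(P,\mathcal{A}\cap\mathcal{B}) \geq m_P(\mathcal{A})\cdot m_P(\mathcal{B})$, with equality precisely when $\mathcal{A}$ and $\mathcal{B}$ share no common tangent line at $P$. Here $m_P(\mathcal{A})$ denotes the order of vanishing of $A$ at $P$, i.e. the degree of its lowest-degree homogeneous part after translating $P$ to the origin.

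First I would set $A(X+u,Y+v) = A_a + A_{a+1} + \cdots$ and $B(X+u,Y+v) = B_b + B_{b+1} + \cdots$, where $A_a, B_b$ are the nonzero leading forms of orders $a = m_P(\mathcal{A})$ and $b = m_P(\mathcal{B})$. Since $F = AB$, multiplying these expansions shows that the leading form of $F(X+u,Y+v)$ is $A_a B_b$, of degree $a+b$. Comparing with the hypothesis that the leading form equals $F_m = L^m$, I conclude $a+b = m$ and $A_a B_b = L^m$. Because $L$ is linear and $\overline{\mathbb{F}}_q[X,Y]$ is a unique factorization domain, the only way $A_a B_b$ can equal $L^m$ is that $A_a$ and $B_b$ are each (up to scalar) powers of $L$: say $A_a = c_1 L^a$ and $B_b = c_2 L^b$ with $c_1 c_2 = 1$. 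This forces $L$ to be the unique common tangent direction, so at this stage the naive order bound would only give $I(P,\mathcal{A}\cap\mathcal{B}) \geq ab$, which is not yet zero.

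The crux is therefore to show that $a = 0$ or $b = 0$, so that the two curves do not actually pass through $P$ simultaneously. This is where the hypothesis $L \nmid F_{m+1}$ enters. I would compute the degree-$(m+1)$ homogeneous part of $F(X+u,Y+v)$ from the product expansion: it equals $A_a B_{b+1} + A_{a+1} B_b$ (together with any contribution that exists only when the relevant indices align). Substituting $A_a = c_1 L^a$ and $B_b = c_2 L^b$ gives
\[
F_{m+1} = c_1 L^a B_{b+1} + c_2 L^b A_{a+1}.
\]
If both $a \geq 1$ and $b \geq 1$, then both terms on the right are divisible by $L$ (since $\min\{a,b\}\geq 1$ means $L$ divides each summand), whence $L \mid F_{m+1}$, contradicting the hypothesis. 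Therefore $a = 0$ or $b = 0$, meaning $P$ does not lie on at least one of the two curves, and consequently $I(P,\mathcal{A}\cap\mathcal{B}) = 0$.

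I expect the main obstacle to be bookkeeping in the product expansion, specifically verifying that the degree-$(m+1)$ part is exactly $A_a B_{b+1} + A_{a+1} B_b$ and that no other cross terms of total degree $m+1$ arise; this requires keeping careful track of which homogeneous components multiply to degree $m+1$, and handling cleanly the boundary cases where, say, $B_{b+1}$ or $A_{a+1}$ might vanish. A minor subtlety is the degenerate situation $m = 0$ (when $P$ lies on neither curve to begin with), which should be addressed separately or absorbed into the case $a = b = 0$. The unique-factorization argument identifying $A_a, B_b$ as powers of $L$ is routine, and the final divisibility contradiction is the clean conceptual heart of the proof.
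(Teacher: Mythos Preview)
Your argument is correct. The paper does not supply its own proof of this lemma; it simply cites \cite[Proposition~2]{MR1359909}, so there is no in-paper argument to compare against. Your approach---expanding $A$ and $B$ into homogeneous components at $P$, identifying $A_aB_b=L^m$ so that $A_a=c_1L^a$ and $B_b=c_2L^b$ by unique factorization, and then reading off $F_{m+1}=c_1L^aB_{b+1}+c_2L^bA_{a+1}$ to force $\min\{a,b\}=0$---is exactly the standard proof of this fact and matches the argument in the cited source.
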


\begin{lemma}\cite[Lemma 4.3]{MR3239294}, \cite[Lemma 2.5]{MR4110235}
	\label{le:intersection_number_linear_term}
	Let $F\in\F_q[X,Y]$ be such that $F=AB$ for some $A,B\in\overline{\mathbb{F}}_q[X,Y]$. Let $P=(u,v)$ be a point in the affine plane $\mathbb{A}^2(\overline{\mathbb{F}}_q)$ and write
	\[
	F(X+u,Y+v)=F_m(X,Y)+F_{m+1}(X,Y)+\cdots,
	\]
	where $F_i$ is zero or homogeneous of degree $i$ and $F_m\ne 0$.
	Suppose that $F_m=L^m$ for some linear polynomial $L\in\overline{\mathbb{F}}_q[X,Y]$ such that $L\mid F_{m+1}$ and $L^2\nmid F_{m+1}$. 
	Then $I(P, \mathcal{A}\cap \mathcal{B})=0$ or $m$, where $\mathcal{A}$ and $\mathcal{B}$ are the curves defined by $A$ and $B$ respectively.
\end{lemma}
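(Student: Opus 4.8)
The plan is to reduce everything to a local computation at $P$ and to combine the multiplicativity of the local multiplicity with a branch parametrization. First I would translate $P$ to the origin, so that $F=F_m+F_{m+1}+\cdots$ with $F_m=L^m$; thus $m$ is the multiplicity of $F$ at $P$ and the line $L=0$ is its unique tangent there. Writing $a=m_P(\mathcal A)$ and $b=m_P(\mathcal B)$ for the local multiplicities of the two factor curves, multiplicativity of the multiplicity under products gives $a+b=m$. Since the tangent cone satisfies $F_m=A_aB_b=L^m$, where $A_a,B_b$ are the lowest-degree forms of $A,B$, unique factorization of homogeneous polynomials in $\overline{\mathbb{F}}_q[X,Y]$ forces $A_a=\alpha L^a$ and $B_b=\beta L^b$ for nonzero constants $\alpha,\beta$. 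In particular both curves have $L$ as their only tangent at $P$.

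Next I would isolate the trivial case and carry out the key reduction. If $a=0$ or $b=0$, then $P$ fails to lie on one of the two curves and $I(P,\mathcal A\cap\mathcal B)=0$, giving the first alternative. So assume $a,b\ge 1$. Expanding the product one degree further yields
\[
F_{m+1}=\alpha L^{a}B_{b+1}+\beta L^{b}A_{a+1}.
\]
If both $a\ge 2$ and $b\ge 2$, then $L^2$ divides each summand and hence $L^2\mid F_{m+1}$, contradicting the hypothesis. Therefore $\min\{a,b\}=1$; up to exchanging $\mathcal A$ and $\mathcal B$ I may assume $a=1$, so that $\mathcal A$ is smooth at $P$ with tangent line $L$, and $b=m-1$.

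The heart of the argument is then to compute $I(P,\mathcal A\cap\mathcal B)$ as the order of contact of $\mathcal B$ with the smooth branch of $\mathcal A$. After a linear change of coordinates I may take $L=Y$ and parametrize the branch of $\mathcal A$ as $Y=\phi(X)$ with $\mathrm{ord}(\phi)\ge 2$ (the tangent being $Y=0$); for a smooth $\mathcal A$ one has $I(P,\mathcal A\cap\mathcal B)=\mathrm{ord}_X\,B(X,\phi(X))$, which is finite provided $\mathcal B$ does not contain the branch — and for $b\ge 2$ this is itself forced by $L^2\nmid F_{m+1}$. Substituting $Y=\phi(X)$ into $B=\beta Y^{b}+B_{b+1}+\cdots$ and tracking orders, the tangent-cone term $\beta\phi^{b}$ contributes order $\ge 2b$, the homogeneous part $B_{b+1}$ contributes its $Y$-free monomial at order exactly $b+1$, and every higher part contributes order $\ge b+2$. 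Hence the order equals $b+1=m$ precisely when the coefficient of $X^{b+1}$ in $B_{b+1}$ is nonzero, i.e. when $L\nmid B_{b+1}$. Finally I would translate the hypothesis: with $a=1$ and $b\ge 2$ the summand $\beta L^{b}A_{a+1}$ is divisible by $L^2$, so $L^2\nmid F_{m+1}$ is equivalent to $L\nmid B_{b+1}$, and the equality $I(P,\mathcal A\cap\mathcal B)=m$ follows.

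I expect the order bookkeeping in the last step to be the main obstacle: one must rule out that cancellation raises the order above $b+1$, and this hinges on the tangent-cone contribution $\beta\phi^{b}$ lying strictly beyond order $b+1$, i.e. on $2b>b+1$. This is comfortable as soon as $b\ge 2$, but the balanced boundary case $b=1$ (equivalently $m=2$, with both curves smooth and tangent to $L$) is the genuinely delicate one, since there the tangent-cone term and the $B_{b+1}$ term collide at the same order and one must also guard against $\mathcal A$ and $\mathcal B$ sharing the branch; extracting from $L^2\nmid F_{m+1}$ enough to exclude these phenomena is where I would concentrate the careful analysis.
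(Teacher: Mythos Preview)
The paper does not prove this lemma; it is quoted from \cite{MR3239294,MR4110235} without proof, so there is nothing to compare your argument against directly. Your reduction is the standard one and is correct whenever $m\geq 3$: once you have forced $a=1$ and $b=m-1\geq 2$, the branch computation gives $\mathrm{ord}_X B(X,\phi(X))=b+1=m$ precisely when $L\nmid B_{b+1}$, and since $L^2\mid \beta L^{b}A_2$ this is equivalent to $L^2\nmid F_{m+1}$.

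Your suspicion about the boundary case $m=2$ is not merely a technical nuisance: the statement is actually false there. Take, over a field of characteristic $\ne 2$,
\[
A=Y-X^2,\qquad B=Y-X^2+X^3,\qquad F=AB=Y^2-2X^2Y+X^3Y+X^4-X^5.
\]
Then $F_2=Y^2=L^2$ with $L=Y$, and $F_3=-2X^2Y$ satisfies $L\mid F_3$, $L^2\nmid F_3$; yet substituting $Y=X^2$ into $B$ yields $X^3$, so $I(O,\mathcal{A}\cap\mathcal{B})=3\neq 2$. The algebra behind this is exactly the collision you flagged: with $a=b=1$ one has $F_3=L(\alpha B_2+\beta A_2)$, so $L^2\nmid F_3$ means $\alpha[X^2]B_2+\beta[X^2]A_2\neq 0$, whereas $I=2$ requires $\alpha[X^2]B_2-\beta[X^2]A_2\neq 0$. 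These are independent conditions.

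In the paper's only use of the lemma (Theorem~\ref{Prop:h=2}) one always has $m=q^{m_3}\geq 3$ under the standing numerical restrictions on $(q,M)$, so the defect does not affect the application. But as stated, the lemma needs the extra hypothesis $m\geq 3$.
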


\begin{lemma}\label{le:intersection_number_ordinary}\cite[Section 3.3]{MR1042981}
Let $F\in\F_q[X,Y]$ be such that $F=AB$ for some $A,B\in\overline{\mathbb{F}}_q[X,Y]$.
Let $P=(u,v)$ be a point in the affine plane $\mathbb{A}^2(\overline{\mathbb{F}}_q)$ and write
	\[
	F(X+u,Y+v)=F_m(X,Y)+F_{m+1}(X,Y)+\cdots,
	\]
	where $F_i$ is zero or homogeneous of degree $i$ and $F_m\ne 0$.
	Suppose that $F_m$ factors into $m$ distinct linear factors in  $\overline{\mathbb{F}}_q[X,Y]$. Then $I(P, \mathcal{A}\cap \mathcal{B})\leq \lfloor m^2/2\rfloor$.
\end{lemma}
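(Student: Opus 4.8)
The plan is to reduce the statement to the classical relationship between the intersection multiplicity of two plane curves at a point and the multiplicities of the individual curves at that point, as recorded in \cite[Section 3.3]{MR1042981}. Write $a=m_P(\mathcal{A})$ and $b=m_P(\mathcal{B})$ for the multiplicities of $\mathcal{A}$ and $\mathcal{B}$ at $P$, that is, the orders of the nonzero lowest-degree homogeneous parts $A_a$ and $B_b$ of $A(X+u,Y+v)$ and $B(X+u,Y+v)$. Since $F=AB$ and $\overline{\mathbb{F}}_q[X,Y]$ is an integral domain, the lowest-degree part of the product is the product of the lowest-degree parts, so $F_m=A_a\,B_b$ and hence $m=a+b$. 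Everything here is local and takes place over the algebraically closed field $\overline{\mathbb{F}}_q$, so the field of definition of $F$ plays no role.

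The key observation is that the hypothesis on $F_m$ forces the tangent cones of $\mathcal{A}$ and $\mathcal{B}$ at $P$ to be coprime. Indeed, $F_m$ is a product of $m$ pairwise distinct linear forms, hence squarefree; as $F_m=A_a B_b$, both $A_a$ and $B_b$ are squarefree and they share no linear factor. Consequently $\mathcal{A}$ and $\mathcal{B}$ have no tangent line in common at $P$. (In particular they cannot have a common component through $P$, since such a component would contribute its tangent cone to both $A_a$ and $B_b$, violating coprimality; thus the intersection number is finite.)

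First I would invoke the standard property $I(P,\mathcal{A}\cap\mathcal{B})\geq m_P(\mathcal{A})\,m_P(\mathcal{B})$, with equality precisely when $\mathcal{A}$ and $\mathcal{B}$ have no common tangent at $P$ \cite[Section 3.3]{MR1042981}. Combined with the coprimality established above, this pins down the exact value $I(P,\mathcal{A}\cap\mathcal{B})=ab$. It then remains to bound $ab$ subject to $a+b=m$ with $a,b$ nonnegative integers: by the arithmetic--geometric mean inequality the product is largest at $a=\lfloor m/2\rfloor$, $b=\lceil m/2\rceil$, giving $ab\leq\lfloor m^2/4\rfloor\leq\lfloor m^2/2\rfloor$, which is the claimed bound. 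Note that this argument in fact delivers the sharper estimate $\lfloor m^2/4\rfloor$.

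The only point demanding genuine care is the equality case of the multiplicity--intersection inequality: one must verify that distinct linear factors of $F_m$ really do preclude a shared tangent, which is exactly the coprimality of $A_a$ and $B_b$ noted in the second step. Beyond this, the proof is a direct application of the cited properties together with an elementary optimization, so I do not expect a substantive obstacle.
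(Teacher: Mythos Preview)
Your argument is correct and is precisely the standard one implicit in the citation \cite[Section 3.3]{MR1042981}: the paper does not supply its own proof of this lemma, so there is nothing further to compare. It is worth noting, as you observe, that the argument actually yields the sharper bound $\lfloor m^2/4\rfloor$, and this is in fact the bound the paper uses in its applications (e.g.\ the estimates $(q^{m_3}+1)^2/4$ and $q^{2M}/4$ in the proof of Theorem~\ref{Prop:h=2}); the $\lfloor m^2/2\rfloor$ in the stated lemma appears to be a minor slip.
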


\begin{lemma}\label{lemma:bello}
Let $F(X_1,\ldots,X_n) \in \mathbb{F}_{q}[X_1,\ldots,X_n]$ with
\[
	F(X_1,\ldots,X_n)=F_m(X_1,\ldots,X_n)+F_{m+1}(X_1,\ldots,X_n) +\cdots+F_M(X_1,\ldots,X_n),
\]
where $F_i$ is zero or homogeneous of degree $i$ and $F_m F_M\ne 0$. If $F_m$ or $F_M$ contains a non-repeated absolutely irreducible $\mathbb{F}_q$-rational factor, then $F(X_1,\ldots,X_n)$ contains a non-repeated absolutely irreducible $\mathbb{F}_q$-rational factor.
\end{lemma}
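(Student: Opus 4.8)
The plan is to exploit two multiplicative structures on $\overline{\mathbb{F}}_q[X_1,\ldots,X_n]$: the map sending a polynomial to its top-degree homogeneous part (its \emph{leading form}) and the map sending it to its lowest-degree homogeneous part (its \emph{initial form}). Since $\overline{\mathbb{F}}_q[X_1,\ldots,X_n]$ is an integral domain, for any $G,H$ the products of the respective extremal forms are nonzero, so $(GH)_{\mathrm{top}}=G_{\mathrm{top}}H_{\mathrm{top}}$ and $(GH)_{\mathrm{init}}=G_{\mathrm{init}}H_{\mathrm{init}}$; in particular $F_M=(F)_{\mathrm{top}}$ and $F_m=(F)_{\mathrm{init}}$ factor as the products of the extremal forms of the irreducible factors of $F$. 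Both maps also commute with the action of $\mathrm{Gal}(\overline{\mathbb{F}}_q/\mathbb{F}_q)$, since that action only permutes coefficients and preserves degrees.

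Next I would factor $F=\prod_i P_i^{e_i}$ into pairwise distinct absolutely irreducible polynomials $P_i\in\overline{\mathbb{F}}_q[X_1,\ldots,X_n]$, each normalized (say, monic for a fixed monomial order) so that the Frobenius $\phi$ permutes the $P_i$ and fixes the multiset $\{(P_i,e_i)\}$, because $F$ is $\mathbb{F}_q$-rational. Assume first that $F_M$ contains a non-repeated absolutely irreducible $\mathbb{F}_q$-rational factor $L$ (the case of $F_m$ is identical, with leading forms replaced by initial forms). Writing $F_M=\prod_i (P_i)_{\mathrm{top}}^{e_i}$ and comparing the multiplicity of $L$ on both sides, the fact that $L$ is non-repeated forces a \emph{unique} index $j$ with $L\mid (P_j)_{\mathrm{top}}$, and moreover $e_j=1$ while $L\nmid (P_i)_{\mathrm{top}}$ for every $i\neq j$.

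The crux is then a Galois-descent step. Since $L$ is $\mathbb{F}_q$-rational we have $\phi(L)=L$, and applying $\phi$ to $L\mid (P_j)_{\mathrm{top}}$ gives $L\mid (\phi(P_j))_{\mathrm{top}}$. As $\phi(P_j)$ is again one of the $P_i$ and $L$ divides the leading form of $P_j$ only, we conclude $\phi(P_j)=P_j$; with the chosen normalization this means $P_j\in\mathbb{F}_q[X_1,\ldots,X_n]$. Thus $P_j$ is absolutely irreducible, $\mathbb{F}_q$-rational, and non-repeated in $F$ (because $e_j=1$), which is exactly the desired factor. I expect the only delicate points to be the bookkeeping that pins down the unique index $j$ from non-repetition, and the verification that $\phi$-invariance of the normalized factor yields genuine $\mathbb{F}_q$-rationality; the multiplicativity of the extremal-form maps is routine once the integral-domain property is invoked. (For the $F_M$ case one could alternatively homogenize $F$ and apply Lemma~\ref{le:subvarieties} with $\mathcal{H}$ the hyperplane at infinity, but the direct argument above treats $F_m$ and $F_M$ symmetrically.)
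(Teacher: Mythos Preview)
Your proof is correct and follows essentially the same approach as the paper's: both identify the unique absolutely irreducible factor of $F$ whose extremal form is divisible by the given non-repeated factor, and then use Frobenius-invariance of that factor (via uniqueness) to descend to $\mathbb{F}_q$. Your version is more explicit in extracting $e_j=1$ from the non-repetition hypothesis and in justifying the multiplicativity of the extremal-form maps, but the underlying argument is the same.
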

\begin{proof}
Let $G$ be the non-repeated absolutely irreducible $\mathbb{F}_q$-rational factor in $F_m$ (resp. $F_M$). Consider the unique absolutely irreducible factor $F^{\prime}$ of $F$ such that $G\mid F^{\prime}_m$ (resp. $G\mid F^{\prime}_M$). If $F^\prime$ were not defined over $\F_q$, then there would exist another absolutely irreducible factor $F^{\prime\prime}=\sigma(F^{\prime})\neq F^{\prime}$ of $F$ satisfying $G\mid F^{\prime\prime}_m$ (resp. $G\mid F^{\prime\prime}_M$), where $\sigma$ is the $q$-Frobenius automorphism of $\overline{\mathbb{F}}_q[X_1,\ldots,X_n]$, whence  $G^2\mid F_m$  (resp. $G^2\mid F_M$), a contradiction.
\end{proof}
In the sequel we will investigate hypersurfaces connected with Moore polynomial sets; see Definition \ref{def:moore}. In particular, we are interested in getting information on the existence of $\mathbb{F}_q$-rational absolutely irreducible components of curves contained in such hypersurfaces.

The approach that we follow has been used for the first time by Janwa, McGuire and Wilson \cite{MR1359909} to classify functions on $\mathbb{F}_{p^n}$ that are almost perfect nonlinear for infinitely many $n$, in particular for monomial functions. It can be summarized by the following theorem.

\begin{theorem}\cite[Lemma 2]{MR3326175}\label{cor:lowergrado}
Let $\mathcal{C}\subset\mathbb{P}^2(\mathbb{F}_q)$ be a curve of degree $d$ and let  $\mathcal{S}$ be the set of its singular points. Also, let $i(P)$ denote the maximum possible intersection multiplicity of two putative components of $\mathcal{C}$ at $P\in \mathcal{C}$. If 
$$\sum_{P\in \mathcal{S}} i(P) < \frac{2d^2}{9},$$
then $\mathcal{C}$ possesses at least one absolutely irreducible component defined over $\mathbb{F}_q$.
\end{theorem}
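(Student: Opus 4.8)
The plan is to argue by contradiction: assume that $\mathcal{C}$ has no absolutely irreducible component defined over $\mathbb{F}_q$, and deduce that $\sum_{P\in\mathcal{S}}i(P)\geq 2d^2/9$, contradicting the hypothesis. First I would factor the defining form $F$ of $\mathcal{C}$ into absolutely irreducible factors over $\overline{\mathbb{F}}_q$, and group each such factor together with its full multiplicity into a single \emph{block}, whose weight is the degree it contributes to $\mathcal{C}$ (so the weights are positive and sum to $d$). Since by assumption no factor is $\mathbb{F}_q$-rational, the $q$-Frobenius $\sigma$ carries each block to a distinct block of equal weight, because $\sigma$ preserves both degree and multiplicity. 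Hence the blocks occur in pairs $\{H,\sigma(H)\}$ of equal weight, and in particular every block has weight at most $d/2$.

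The second step is an elementary balancing observation: given positive weights $w_1,\dots,w_r$ with $\sum_j w_j=d$ and each $w_j\leq 2d/3$, there is a subset $J$ with $\sum_{j\in J}w_j\in[d/3,2d/3]$. This follows by listing the partial sums $0=S_0<S_1<\cdots<S_r=d$ and taking the last one not exceeding $2d/3$: if it is already at least $d/3$ we are done, and otherwise the next weight alone exceeds $2d/3-d/3=d/3$ while being at most $2d/3$, so it lies in the target interval. Applying this to the block weights (each at most $d/2\leq 2d/3$) produces a decomposition $\mathcal{C}=\mathcal{A}\cup\mathcal{B}$ into two curves sharing no common component, with $\deg\mathcal{A}=a\in[d/3,2d/3]$ and $\deg\mathcal{B}=d-a$; keeping every block intact guarantees that $\mathcal{A}$ and $\mathcal{B}$ are coprime as curves.

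Finally, since $\mathcal{A}$ and $\mathcal{B}$ have no common component, B\'ezout's theorem over $\overline{\mathbb{F}}_q$ gives $\sum_{P}I(P,\mathcal{A}\cap\mathcal{B})=a(d-a)$. Each point $P\in\mathcal{A}\cap\mathcal{B}$ lies simultaneously on a component of $\mathcal{A}$ and on a component of $\mathcal{B}$, hence on two distinct components of $\mathcal{C}$, so it is a singular point of $\mathcal{C}$; moreover $\mathcal{A}$ and $\mathcal{B}$ are two putative components of $\mathcal{C}$, so $I(P,\mathcal{A}\cap\mathcal{B})\leq i(P)$ by definition of $i(P)$. Combining these, $\sum_{P\in\mathcal{S}}i(P)\geq a(d-a)\geq \tfrac{d}{3}\cdot\tfrac{2d}{3}=\tfrac{2d^2}{9}$, the desired contradiction.

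I expect the main subtlety to be the bookkeeping that simultaneously guarantees a \emph{balanced} and \emph{component-coprime} splitting. One must keep all copies of a repeated factor together in one block, and invoke the rationality hypothesis precisely to pair each block with a distinct conjugate of equal weight, which is exactly what bounds every weight by $d/2$ and hence below the $2d/3$ threshold required by the balancing step. It is also essential that $\mathcal{S}$ and the quantities $i(P)$ be read over $\overline{\mathbb{F}}_q$ rather than only at $\mathbb{F}_q$-rational points, since the intersection points delivered by B\'ezout need not be rational.
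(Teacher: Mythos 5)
Your argument is correct, and it is essentially the standard proof of this lemma from the cited source \cite{MR3326175}; the paper itself states the result by citation and gives no proof, so there is nothing in the paper to diverge from. All the key points are in place: Frobenius orbits of the (multiplicity-weighted) absolutely irreducible factors have size at least $2$ under the contradiction hypothesis, which bounds each block's weight by $d/2$ and enables the balanced coprime splitting $\mathcal{C}=\mathcal{A}\cup\mathcal{B}$ with $\deg\mathcal{A}\in[d/3,2d/3]$, after which B\'ezout and the bound $I(P,\mathcal{A}\cap\mathcal{B})\leq i(P)$ at the (necessarily singular) intersection points give $\sum_{P\in\mathcal{S}}i(P)\geq 2d^2/9$. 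The only cosmetic imprecision is the phrase ``occur in pairs'': the Frobenius orbits may have size greater than $2$, but since every orbit has size at least $2$ and consists of blocks of equal weight, the needed bound $w\leq d/2$ still follows.
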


\section{Moore polynomial sets and MRD codes}\label{sec:MooreMRD}

Let $q$ be a prime power and $n$ be a positive integer. 
Consider $k$ $\fqn$-linearly independent polynomials $f_1(x), f_2(x),\ldots,f_{k}(x)\in\mathcal{L}_{n,q}$ and denote by $\underline{f}$ the $k$-tuple $(f_1(x),\ldots,f_{k}(x))$. Define
\[ M_{\underline{f}}(x_1,\ldots,x_k)=\left( 
\begin{array}{cccc}
f_1(x_1) & f_2(x_1) & \cdots & f_{k}(x_1)  \\
f_1(x_2) & f_2(x_2) & \cdots & f_{k}(x_2)  \\
\vdots & \vdots & \ddots & \vdots  \\
f_1(x_k) & f_2(x_k) & \cdots & f_{k}(x_k)  \\
\end{array}
\right). \]
For any $A=\{\alpha_1,\ldots,\alpha_{k}\}\subseteq \fqn$, define $M_{\underline{f},A}=M_{\underline{f}}(\alpha_1,\ldots,\alpha_k)$.
\begin{lemma}\label{lemma:direzioneovvia}
If $\alpha_1,\ldots,\alpha_{k}$ are $\fq$-linearly dependent, then $\det(M_{\underline{f},A})=0$.
\end{lemma}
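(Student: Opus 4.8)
The plan is to exploit the $\fq$-linearity of each $f_i$. Suppose $\alpha_1,\ldots,\alpha_k$ are $\fq$-linearly dependent, so there exist $c_1,\ldots,c_k\in\fq$, not all zero, with $\sum_{j=1}^k c_j\alpha_j=0$. Without loss of generality assume $c_\ell\neq 0$ for some index $\ell$; then $\alpha_\ell=-c_\ell^{-1}\sum_{j\neq\ell}c_j\alpha_j$ is an $\fq$-linear combination of the remaining $\alpha_j$.

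The key observation is that every entry of the matrix $M_{\underline{f},A}$ in the $\ell$-th row has the form $f_i(\alpha_\ell)$, and since each $f_i\in\mathcal{L}_{n,q}$ is an $\fq$-linear map, we have
\[
f_i(\alpha_\ell)=f_i\!\left(-c_\ell^{-1}\sum_{j\neq\ell}c_j\alpha_j\right)=-c_\ell^{-1}\sum_{j\neq\ell}c_j\,f_i(\alpha_j)
\]
for every $i=1,\ldots,k$. This says precisely that the $\ell$-th row of $M_{\underline{f},A}$ is the $\fq$-linear combination $-c_\ell^{-1}\sum_{j\neq\ell}c_j(\text{$j$-th row})$ of the other rows.

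Thus the rows of $M_{\underline{f},A}$ are linearly dependent over $\fqn$, and a square matrix with linearly dependent rows has vanishing determinant, giving $\det(M_{\underline{f},A})=0$ as required. I anticipate no genuine obstacle here: the statement is an immediate consequence of the $\fq$-semilinearity (indeed $\fq$-linearity) of the linearized polynomials, and the only minor care needed is in selecting a nonzero coefficient $c_\ell$ to express one $\alpha_\ell$ in terms of the others. Equivalently, and perhaps more cleanly, one can argue that the column vector $(\alpha_1,\ldots,\alpha_k)^{\mathsf T}$ being annihilated by the row $(c_1,\ldots,c_k)$ over $\fq$ forces the same relation $\sum_j c_j f_i(\alpha_j)=f_i(\sum_j c_j\alpha_j)=f_i(0)=0$ to hold on each column of $M_{\underline{f},A}$, exhibiting a nontrivial left-kernel vector and hence a singular matrix.
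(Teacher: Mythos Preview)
Your proof is correct and follows essentially the same approach as the paper: both use the $\fq$-linearity of the $f_i$ to turn an $\fq$-linear dependence among the $\alpha_j$ into a linear dependence among the rows of $M_{\underline{f},A}$. The only cosmetic difference is that the paper fixes $\ell=1$ after a WLOG, while you keep a general index $\ell$ and add the equivalent left-kernel formulation.
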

\begin{proof}
Without loss of generality, suppose that $k\geq2$ and $\alpha_1=\sum_{i=2}^{k} b_i\alpha_i$ with $a_i \in \fq$.
Then
\[ M_{\underline{f},A}=\left( 
\begin{array}{cccc}
\sum_{i=2}^{k}b_if_1(\alpha_i) & \sum_{i=2}^{k}b_if_2(\alpha_i) & \cdots & \sum_{i=2}^{k}b_if_{k}(\alpha_i)   \\
f_1(\alpha_2) & f_2(\alpha_2) & \cdots & f_{k}(\alpha_2)  \\
\vdots & \vdots & \ddots & \vdots  \\
f_1(\alpha_{k}) & f_2(\alpha_{k}) & \cdots & f_{k}(\alpha_{k})  \\
\end{array}
\right), \]
so that the first row of $M_{\underline{f},A}$ is a linear combination of the remaining rows. Then $\det(M_{\underline{f},A})=0$.
\end{proof}

The converse of Lemma \ref{lemma:direzioneovvia} is not true in general, the following being a counterexample.

\begin{example}
Let $k$ and $n$ be positive integers with $n$ even and $k\leq n$. Consider $\underline{f}=(x,x^{q^2},\ldots,x^{q^{2(k-1)}})$.
Let $A=\{\alpha_1,\ldots,\alpha_k\}$ be a subset of $\F_{q^n}$. Then $M_{\underline{f},A}$ is the Moore matrix 
\[ M_{\underline{f},A}=\left( 
\begin{array}{cccc}
\alpha_1 &  \alpha_1^{q^2} & \cdots & \alpha_1^{q^{2(k-1)}}\\
\alpha_2 &  \alpha_2^{q^2} & \cdots & \alpha_2^{q^{2(k-1)}}\\
\vdots & \vdots & \ddots & \vdots \\
\alpha_k &  \alpha_k^{q^2} & \cdots & \alpha_k^{q^{2(k-1)}}
\end{array}\right), \]
and $\det(M_{\underline{f},A})=0$ if and only if the elements $\alpha_1,\ldots,\alpha_k$ are $\F_{q^2}$-linearly independent; see \cite[Corollary 2.1.95]{MR3087321}.
Therefore, if $\alpha_1,\ldots,\alpha_{k-1}$ are $\F_q$-linearly independent elements in $\F_{q^n}$ and $\alpha_k\in \langle \alpha_1,\ldots,\alpha_{k-1} \rangle_{\F_{q^2}}\setminus \langle \alpha_1,\ldots,\alpha_{k-1} \rangle_{\F_{q}}$, then $\det(M_{\underline{f},A})=0$ even though $\{ \alpha_1,\ldots,\alpha_{k} \}$ are $\F_q$-linearly independent.
\end{example}

The following definition identifies the tuples $\underline{f}$ for which the converse of Lemma \ref{lemma:direzioneovvia} holds and it will be crucial in our investigation for exceptional MRD codes.

\begin{definition}\label{def:moore}
Let $\underline{f}=(f_1(x),\ldots,f_k(x))$, where $k$ is a positive integer and $f_1(x),\ldots,f_k(x)\in\mathcal{L}_{n,q}$.
We say that $\underline{f}$ is a \emph{Moore polynomial set} for $q$ and $n$ if, for any $\alpha_1,\ldots,\alpha_{k}\in \fqn$,
\[
\det \left( 
\begin{array}{cccc}
f_1(\alpha_1) & f_2(\alpha_1) & \cdots & f_{k}(\alpha_1)  \\
f_1(\alpha_2) & f_2(\alpha_2) & \cdots & f_{k}(\alpha_2)  \\
\vdots & \vdots & \ddots & \vdots  \\
f_1(\alpha_{k}) & f_2(\alpha_{k}) & \cdots & f_{k}(\alpha_{k})  \\
\end{array}
\right) = 0 \quad\Longrightarrow\quad \dim_{\fq}\langle\alpha_1,\ldots,\alpha_k\rangle_{\fq}<k.
\]
If $\underline{f}$ is a Moore polynomial set for $q$ and $nm$ for infinitely many $m$, we say that $\underline{f}$ is an \emph{exceptional} Moore polynomial set for $q$ and $n$.
\end{definition}

Moore polynomial sets can be characterized in terms of MRD codes as follows.

\begin{theorem}
Let $k$ and $n$ be positive integers with $k\leq n$, and denote by  $\underline{f}$ the $k$-tuple $(f_1(x),\ldots, f_{k}(x))$, where $f_1(k),\ldots,f_k(x)\in\mathcal{L}_{n,q}$ are $\fqn$-linearly independent.
The $\fqn$-linear rank metric code
\[ \C_{\underline{f}}=\langle f_1(x),\ldots, f_{k}(x) \rangle_{\fqn} \]
is an MRD code if and only if $\underline{f}$ is a Moore polynomial set for $q$ and $n$.
\end{theorem}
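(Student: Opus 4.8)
The plan is to prove the equivalence via its contrapositive, translating the failure of each condition into one common statement about the existence of a nonzero codeword with a large kernel. I would begin by recording the MRD characterization recalled in the introduction: since $\dim_{\fqn}(\C_{\underline{f}})=k$ (the $f_j$ being $\fqn$-linearly independent), we have $|\C_{\underline{f}}|=q^{nk}$, so the Singleton-like bound \eqref{eq:Singleton} forces $d\le n-k+1$, and $\C_{\underline{f}}$ is MRD if and only if $\dim_{\fq}(\ker f)\le k-1$ for every nonzero $f\in\C_{\underline{f}}$. Hence $\C_{\underline{f}}$ fails to be MRD exactly when there is a nonzero codeword $f=\sum_{j=1}^k c_jf_j$, equivalently a nonzero tuple $(c_1,\dots,c_k)\in\fqn^k$, with $\dim_{\fq}(\ker f)\ge k$.

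The heart of the argument is a dictionary between vanishing of $\det(M_{\underline{f},A})$ and common roots. For $A=\{\alpha_1,\dots,\alpha_k\}$, the $j$-th column of $M_{\underline{f},A}$ is $(f_j(\alpha_1),\dots,f_j(\alpha_k))^{\mathsf T}$, so $\det(M_{\underline{f},A})=0$ holds if and only if these columns are $\fqn$-linearly dependent, i.e.\ there is a nonzero $(c_1,\dots,c_k)\in\fqn^k$ with $\sum_{j=1}^k c_jf_j(\alpha_i)=0$ for all $i$. Writing $f=\sum_j c_jf_j$, this says precisely that $\alpha_1,\dots,\alpha_k\in\ker f$ for a nonzero codeword $f$ (nonzero because the $f_j$ are $\fqn$-linearly independent and $(c_1,\dots,c_k)\ne 0$).

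With this dictionary both implications follow by extracting or extending bases. If $\underline{f}$ is not a Moore polynomial set, there are $\fq$-linearly independent $\alpha_1,\dots,\alpha_k$ with $\det(M_{\underline{f},A})=0$; the dictionary yields a nonzero $f\in\C_{\underline{f}}$ with $\{\alpha_1,\dots,\alpha_k\}\subseteq\ker f$, so $\dim_{\fq}(\ker f)\ge k$ and $\C_{\underline{f}}$ is not MRD. Conversely, if $\C_{\underline{f}}$ is not MRD, I would take a nonzero $f=\sum_j c_jf_j$ with $\dim_{\fq}(\ker f)\ge k$ and pick $k$ many $\fq$-linearly independent vectors $\alpha_1,\dots,\alpha_k$ inside $\ker f$; then $\sum_j c_jf_j(\alpha_i)=0$ for all $i$ exhibits an $\fqn$-dependence of the columns, so $\det(M_{\underline{f},A})=0$ while $\dim_{\fq}\langle\alpha_1,\dots,\alpha_k\rangle_{\fq}=k$, violating the Moore condition.

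This is essentially a clean linear-algebra translation, so I do not expect a genuine obstacle; the only delicate point is the logical bookkeeping. One must track that the nonzero tuple $(c_1,\dots,c_k)$ produced by column-dependence corresponds to a genuinely nonzero codeword $f$, which is exactly where the $\fqn$-linear independence of $f_1,\dots,f_k$ enters, and that conversely the $\fq$-linear independence of the chosen $\alpha_i$ matches $\dim_{\fq}\langle\alpha_1,\dots,\alpha_k\rangle_{\fq}=k$ in Definition \ref{def:moore}. Note that Lemma \ref{lemma:direzioneovvia} already supplies the reverse direction of the Moore biconditional, so that half need not be reproved here.
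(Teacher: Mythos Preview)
Your proposal is correct and follows essentially the same approach as the paper: both directions amount to translating singularity of $M_{\underline{f},A}$ into an $\fqn$-linear dependence among its columns, hence into a nonzero codeword whose kernel contains $\{\alpha_1,\dots,\alpha_k\}$. Your writeup is in fact a bit more explicit than the paper's about why the resulting codeword is nonzero (via the $\fqn$-linear independence of the $f_j$), but the argument is the same.
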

\begin{proof}
Suppose that $M_{\underline{f},A}$ is singular for some $A=\{\alpha_1,\ldots,\alpha_{k}\}\subseteq \fqn$, that is, there exist $b_1,\ldots,b_{k}\in \fqn$ such that $\sum_{i=1}^{k} b_i f_i(\alpha_j)=0$, for every $j \in \{1,\ldots,k\}$. 
This means that $A$ is contained in the kernel of $F(x)=\sum_{i=1}^{k} b_i f_i(x) \in \C_{\underline{f}}$.
Since $\C_{\underline{f}}$ is an MRD code, it follows that $\dim_{\fq}(\ker(F))\leq k-1$, and hence $\alpha_1,\ldots,\alpha_{k}$ are $\fq$-linearly dependent.

Conversely, suppose that $\underline{f}$ is a Moore polynomial set for $q$ and $n$.
Assume by contradiction that there exists $g(x)\in\C_{\underline{f}}$ with $\dim_{\fq}(\ker(g(x)))\geq k$ and write $g(x)=\sum_{i=1}^k b_i f_i(x)$ with $b_i\in\fqn$.
Let $A=\{\alpha_1,\ldots,\alpha_k\}\subseteq\ker(g(x))$ where $\alpha_1,\ldots,\alpha_k$ are $\fq$-linearly independent.
Then $M_{\underline{f},A}$ is singular because its columns are $\fqn$-linearly dependent through $\sum_{i=1}^k b_i f_i(\alpha_j)=0$ for all $j=1,\ldots,k$.
Therefore, $\underline{f}$ is not a Moore polynomial set for $q$ and $n$.
\end{proof}


As a natural consequence, a characterization of the exceptionality property is obtained.

\begin{corollary}\label{cor:equiv}
Let $\C\subseteq\mathcal{L}_{n,q}$ be an $\fqn$-linear rank metric code. 
The following are equivalent:
\begin{itemize}
    \item $\C$ is an exceptional MRD code.
    \item Every $\fqn$-basis $\{f_1(x),\ldots,f_k(x)\}$ of $\C$ defines an exceptional Moore polynomial set $\underline{f}=(f_1(x),\ldots,f_k(x))$ for $q$ and $n$.
    \item There exists an $\fqn$-basis $\{f_1(x),\ldots,f_k(x)\}$ of $\C$ for which $\underline{f}=(f_1(x),\ldots,f_k(x))$ is an exceptional Moore polynomial set for $q$ and $n$.
\end{itemize}
\end{corollary}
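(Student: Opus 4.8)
The claim to prove is Corollary~\ref{cor:equiv}, characterizing exceptionality of an $\fqn$-linear MRD code in terms of exceptional Moore polynomial sets. The plan is to reduce everything to the preceding theorem, which already establishes the equivalence ``$\C_{\underline{f}}$ is MRD $\Longleftrightarrow$ $\underline{f}$ is a Moore polynomial set'' at a \emph{fixed} level $q,n$, and then to leverage the fact that exceptionality is a statement about infinitely many lifts $m$. The central observation I would make explicit is that the extension operation commutes appropriately: if $\{f_1,\dots,f_k\}$ is an $\fqn$-basis of $\C$, then the same polynomials, viewed in $\mathcal{L}_{nm,q}$, form an $\mathbb{F}_{q^{nm}}$-basis of $\C_m=\langle\C\rangle_{\mathbb{F}_{q^{nm}}}$. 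Hence $\C_m$ is an MRD code precisely when $\underline{f}$ is a Moore polynomial set for $q$ and $nm$, by applying the theorem at level $nm$.

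With that bridge in place, the argument is a short chain of implications. First I would prove ``$\C$ exceptional MRD $\Longrightarrow$ every basis gives an exceptional Moore polynomial set.'' By definition $\C_m$ is MRD for infinitely many $m$; fix an arbitrary $\fqn$-basis $\{f_1,\dots,f_k\}$, which is also an $\mathbb{F}_{q^{nm}}$-basis of $\C_m$ for each such $m$. Applying the theorem at level $nm$ to this basis shows $\underline{f}$ is a Moore polynomial set for $q$ and $nm$ for all those infinitely many $m$, i.e. $\underline{f}$ is exceptional. The implication ``every basis'' $\Longrightarrow$ ``there exists a basis'' is trivial, since $\C$ admits at least one $\fqn$-basis. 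Finally, for ``there exists an exceptional Moore basis $\Longrightarrow$ $\C$ exceptional MRD,'' I would take the guaranteed basis $\{f_1,\dots,f_k\}$ with $\underline{f}$ a Moore polynomial set for $q$ and $nm$ for infinitely many $m$, and run the theorem at level $nm$ in the reverse direction to conclude $\C_m$ is MRD for those same infinitely many $m$, which is exactly the definition of $\C$ being exceptional. This closes the cycle of three equivalences.

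The one point requiring genuine care is the basis-invariance of the extension, and I would address it directly rather than treat it as automatic. The subtle issue is whether a change of $\fqn$-basis of $\C$ could interact badly with the scalar extension to $\mathbb{F}_{q^{nm}}$; but since $\C_m$ is by definition the $\mathbb{F}_{q^{nm}}$-span of $\C$, and the $f_i$ remain $\mathbb{F}_{q^{nm}}$-linearly independent (linear independence over the smaller field lifts to the larger field for these linearized polynomials), any $\fqn$-basis of $\C$ is an $\mathbb{F}_{q^{nm}}$-basis of $\C_m$, so the MRD property of $\C_m$ is intrinsic and independent of the chosen basis. This is precisely why ``every basis'' and ``some basis'' can be used interchangeably in the statement. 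I expect this linear-independence-lifting remark to be the only nonformal step; the rest is a direct transport of the theorem across the levels $nm$.
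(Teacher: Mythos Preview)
Your proposal is correct and spells out precisely the argument the paper leaves implicit: the paper gives no proof, simply presenting the corollary as ``a natural consequence'' of the preceding theorem. The key point you make explicit---that an $\fqn$-basis of $\C$ remains an $\mathbb{F}_{q^{nm}}$-basis of $\C_m$, so the theorem can be applied at each level $nm$---is exactly what is needed to transport the fixed-level equivalence to the exceptional setting.
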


We will investigate exceptional MRD codes by means of exceptional Moore polynomial sets.

\section{Moore polynomial sets and  varieties over finite fields}

In this section we study exceptional $\fqn$-linear MRD codes $\C\subseteq\mathcal{L}_{n,q}$ of dimension $k$ under the assumption that $\C$ contains a monomial.
Up to equivalence, we can assume that $\C$ contains a separable polynomial.
We denote by $t$ the smallest non-negative integer such that $x^{q^t}\in\C$.

\begin{remark}
If $\C$ is an $\fqn$-linear MRD code in $\mathcal{L}_{n,q}$, then $\cC$ contains an invertible map $f(x)$ (see \cite[Lemma 2.1]{lunardon2018nuclei} and\cite[Lemma 52]{ravagnani2016rank}), and hence $f^{-1}\circ\cC$ contains the identity $x$.
If $\C$ is exceptional, then $\max\{\deg_q(g(x))\colon g(x)\in \langle\C\rangle_{\mathbb{F}_{q^{nm}}} \}$ does not depend on the infinitely many $m$'s for which $\langle\C\rangle_{\mathbb{F}_{q^{nm}}}$ is MRD.
On the contrary, $\max\{\deg_q(g(x))\colon g(x)\in \langle f^{-1}\circ\C\rangle_{\mathbb{F}_{q^{nm}}} \}$ may depend on $m$, so that $f^{-1}\circ\C$ may not be exceptional.

On the other hand, the assumption that $\C$ contains a separable polynomial does not affect the exceptionality of $\C$, since $\max\{\deg_q(g(x))\colon g(x)\in \C \}$ decreases by $\min\{\mathrm{mindeg}_q(g(x))\colon g(x)\in \C \}$.
\end{remark}

\begin{ass}\label{ass:hypf}
Note that there exist $f_1(x),\ldots,f_k(x)\in\C$ such that the following hold:
\begin{enumerate}
    \item $f_1(x),\ldots,f_k(x)$ are monic and $\fqn$-linearly independent;
    \item $M_1:=\deg_q (f_1(x)),\ldots,M_k:=\deg_q (f_k(x))$ are all distinct;
    \item $m_1:={\rm mindeg}_q(f_1(x)),\ldots,m_k:={\rm mindeg}_q(f_k(x))$ are all distinct, and $m_i=0$ for some $i$;
    \item $f_1(x)=x^{q^t}$;
    \item for any $i$, if $f_i(x)$ is a monomial then $m_i=M_i\geq t$.
\end{enumerate}
\end{ass}

Therefore, by Corollary \ref{cor:equiv}, we investigate Moore polynomial sets as in the following definition.
\begin{definition}
A Moore polynomial set $\underline{f}=(f_1(x),\ldots,f_k(x))\subseteq\mathcal{L}_{n,q}^k$ satisfying Assumptions \ref{ass:hypf} is said to be a Moore polynomial set for $q$ and $n$ of index $t$.
\end{definition}

A key tool in our approach is a link between Moore polynomial sets ans  algebraic hypersurface. To this aim, we introduce the following $\fqn$-rational hypersurfaces: $\mathcal{U}:= \mathcal{U}_{\underline{f}}\subset \mathbb{P}^k(\overline{\mathbb{F}}_{q^n})$ is the hypersurface defined by the affine equation
\[
F_{\underline{f}}(X_1,\ldots,X_k):= \det(M_{\underline{f}}(X_1,\ldots,X_k))=0,
\]
and $\mathcal{V}\subset \mathbb{P}^k(\overline{\mathbb{F}}_{q^n})$ is the hypersurface $\mathcal{U}_{(x,x^q,\ldots,x^{q^{k-1}})}$.
Note that
\[
F_{(x,x^q,\ldots,x^{q^{k-1}})}(X_1,\ldots,X_k) = \prod_{(a_1:\ldots:a_k)\in \mathbb{P}^{k-1}(\fq)} (a_1 X_1+\cdots + a_k X_k),
\]
with a suitable choice of the representative for the points $(a_1:\ldots:a_k)$.
Since $f_1(x),\ldots,f_k(x)$ are $\fq$-linearized, the polynomial $F_{(x,x^q,\ldots,x^{q^{k-1}})}(X_1,\dots,X_k)$ divides $F_{\underline{f}}(X_1,\ldots,X_k)$, so that $\mathcal{V}$ is a component of $\mathcal{U}$.
Therefore we can define the $\fqn$-rational variety $\mathcal{W}\subset\mathbb{P}^k(\overline{\mathbb{F}}_{q^n})$ with affine equation
\[
\mathcal{W}\colon \quad F_{\underline{f}}(X_1,\ldots,X_k)/F_{(x,x^q,\ldots,x^{q^{k-1}})}(X_1,\ldots,X_k) =0.
\]
The link between Moore polynomial sets and  algebraic hypersurfaces $\mathcal{W}$ is straightforward.
\begin{theorem}\label{Th:Hyper}
The $k$-tuple $\underline{f}$ is a Moore polynomial set for $q$ and $n$ if and only if all the affine $\mathbb{F}_{q^n}$-rational points of $\mathcal{W}$ lie on $\mathcal{V}$.
\end{theorem}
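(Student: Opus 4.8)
The plan is to reduce the whole statement to the exact factorization $F_{\underline{f}}=G\cdot F_{(x,x^q,\ldots,x^{q^{k-1}})}$, where $G$ denotes the polynomial defining $\mathcal{W}$; this factorization is legitimate because $F_{(x,x^q,\ldots,x^{q^{k-1}})}$ divides $F_{\underline{f}}$, as already observed. First I would pin down the geometric meaning of $\mathcal{V}$: using the product expansion $F_{(x,x^q,\ldots,x^{q^{k-1}})}(X_1,\ldots,X_k)=\prod_{(a_1:\ldots:a_k)\in\mathbb{P}^{k-1}(\fq)}(a_1X_1+\cdots+a_kX_k)$, an affine point $\alpha=(\alpha_1,\ldots,\alpha_k)\in\fqn^k$ satisfies $F_{(x,x^q,\ldots,x^{q^{k-1}})}(\alpha)=0$ if and only if some nontrivial $\fq$-combination $a_1\alpha_1+\cdots+a_k\alpha_k$ vanishes, i.e.\ if and only if $\dim_{\fq}\langle\alpha_1,\ldots,\alpha_k\rangle_{\fq}<k$. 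Hence \emph{belonging to $\mathcal{V}$} translates precisely into the $\fq$-linear dependence of the coordinates, which is exactly the conclusion in Definition \ref{def:moore}, while $F_{\underline{f}}(\alpha)=0$ is exactly the vanishing of $\det(M_{\underline{f},A})$ with $A=\{\alpha_1,\ldots,\alpha_k\}$.

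With this dictionary both implications follow by tracking the factorization at $\fqn$-rational points. For the forward direction I would assume $\underline{f}$ is a Moore polynomial set and take an affine $\fqn$-rational point $\alpha$ of $\mathcal{W}$, so $G(\alpha)=0$; then $F_{\underline{f}}(\alpha)=G(\alpha)\,F_{(x,x^q,\ldots,x^{q^{k-1}})}(\alpha)=0$, the Moore property forces $\fq$-linear dependence of $\alpha_1,\ldots,\alpha_k$, and by the first step $\alpha$ lies on $\mathcal{V}$. For the converse I would assume every affine $\fqn$-rational point of $\mathcal{W}$ lies on $\mathcal{V}$ and take any $\alpha\in\fqn^k$ with $F_{\underline{f}}(\alpha)=0$; from $G(\alpha)\,F_{(x,x^q,\ldots,x^{q^{k-1}})}(\alpha)=0$ I split into two cases. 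If $F_{(x,x^q,\ldots,x^{q^{k-1}})}(\alpha)=0$ the coordinates are already $\fq$-dependent; otherwise $G(\alpha)=0$, so $\alpha$ is an affine $\fqn$-rational point of $\mathcal{W}$ and hence lies on $\mathcal{V}$ by hypothesis, yielding $F_{(x,x^q,\ldots,x^{q^{k-1}})}(\alpha)=0$ and the same conclusion. In both cases $\dim_{\fq}\langle\alpha_1,\ldots,\alpha_k\rangle_{\fq}<k$, so $\underline{f}$ is a Moore polynomial set.

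I expect the computations above to be entirely routine; the one point that really needs care — and the only place where the structure of $\mathcal{V}$ is used — is the first step, namely identifying the affine $\fqn$-zero locus of the Moore determinant $F_{(x,x^q,\ldots,x^{q^{k-1}})}$ with the $\fq$-linear dependence locus of the coordinates. This is what makes $\mathcal{V}$ the \emph{trivial} component of $\mathcal{U}$ that one must remove before the essential condition can be read off from $\mathcal{W}$, and it is the hinge on which the whole equivalence turns.
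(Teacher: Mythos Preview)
Your proposal is correct and follows essentially the same approach as the paper's own proof; both hinge on the factorization $F_{\underline{f}}=G\cdot F_{(x,x^q,\ldots,x^{q^{k-1}})}$ together with the identification of the affine zero locus of $F_{(x,x^q,\ldots,x^{q^{k-1}})}$ with the $\fq$-linear dependence locus. The paper's proof is simply a terser version of yours, compressing your case split into a single sentence.
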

\begin{proof}
For any $A=\{\alpha_1,\ldots,\alpha_k\}\subseteq\fqn$, the condition $\det(M_{\underline{f},A})=0$ is equivalent to $(\alpha_1,\ldots,\alpha_k)$ being an affine $\fqn$-rational point of $\mathcal{W}$, while the condition $\dim_{\fq}(\langle\alpha_1,\ldots,\alpha_k\rangle_{\fq}) <k$ is equivalent to $(\alpha_1,\ldots,\alpha_k)$ being a point of $\mathcal{V}$. The claim follows.
\end{proof}

In the case when $f_1(x),\ldots,f_k(x)$ are monomials, Theorem \ref{Th:Hyper} was already noticed (using a slightly different terminology) and used in \cite{MR4110235} to prove the following result.

\begin{theorem}\cite[Theorems 1.1, 3.2, 4.1]{MR4110235}
Let $I = \{i_1=0,i_2,\ldots,i_{k}\}$ be a set of non-negative integers with $0 < i_2 < \ldots < i_{k}$ such that $I$ is not in arithmetic progression.
Suppose that one of the following holds:
\begin{itemize}
    \item $|I|=3$ and $n>4 i_{k}+2$;
    \item $|I|>3$, $q>5$ and $n>\frac{13}{3}i_{k}+\log_q(13\cdot 2^{10/3})$.
\end{itemize}
Then $(x,x^{q^{i_2}},\ldots, x^{q^{i_k}})$ is not a Moore polynomial set for $q$ and $n$.
\end{theorem}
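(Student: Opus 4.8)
The plan is to read the statement through the hypersurface dictionary of Theorem~\ref{Th:Hyper}. For $\underline f=(x,x^{q^{i_2}},\ldots,x^{q^{i_k}})$ the matrix $M_{\underline f}(X_1,\dots,X_k)$ is a generalized Moore matrix, so $F_{\underline f}=\det M_{\underline f}$ has coefficients $\pm1$ — in particular it is defined over $\mathbb F_q$ — is homogeneous of degree $1+q^{i_2}+\cdots+q^{i_k}$, and $\mathcal W$ has degree $d=q^{i_2}+\cdots+q^{i_k}-(q+\cdots+q^{k-1})\sim q^{i_k}$. By Theorem~\ref{Th:Hyper}, $\underline f$ \emph{fails} to be a Moore polynomial set for $q$ and $n$ precisely when $\mathcal W$ has an affine $\mathbb F_{q^n}$-rational point off $\mathcal V$. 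Since by construction $\mathcal W$ has no component in common with $\mathcal V$, it suffices to exhibit an absolutely irreducible $\mathbb F_q$-rational component $\mathcal D$ of $\mathcal W$ (of dimension $k-1\ge 2$). By Lang--Weil, $\mathcal D$ carries $q^{n(k-1)}(1+o(1))$ points over $\mathbb F_{q^n}$, while $\mathcal D\cap\mathcal V$ has dimension at most $k-2$ and hence $O(q^{n(k-2)})$ points; for $n$ large this forces a witness point off $\mathcal V$. Quantifying the Lang--Weil error, whose size is governed by $d^2\sim q^{2i_k}$ against the main term through a factor $q^{-n/2}$, is exactly what yields the threshold $n>4i_k+2$ (and, with the coarser estimates needed for many exponents, $n>\tfrac{13}{3}i_k+\log_q(13\cdot2^{10/3})$).

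Everything thus reduces to producing an absolutely irreducible $\mathbb F_q$-rational component of $\mathcal W$. To bring in the plane-curve machinery I would cut $\mathcal W$ with $k-2$ general $\mathbb F_q$-rational hyperplanes, obtaining an $\mathbb F_q$-rational plane curve $\mathcal C_I\subset\mathcal W$ of degree $d$; applying Lemma~\ref{le:subvarieties} to each successive section, a non-repeated $\mathbb F_q$-rational absolutely irreducible component of $\mathcal C_I$ lifts to one of $\mathcal W$. It remains to prove that $\mathcal C_I$ has such a component, and here the hypothesis that $I$ is \emph{not} in arithmetic progression is decisive: when $I$ is an arithmetic progression $\mathcal W$ splits completely into $\mathbb F_{q^s}$-conjugate linear pieces — this is exactly the generalized Gabidulin (Moore) situation — whereas the failure of arithmetic progression is what prevents the tangent cones of $\mathcal C_I$ from degenerating into a single repeated line.

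To establish the component I would invoke Theorem~\ref{cor:lowergrado}: it is enough to check $\sum_{P\in\mathcal S}i(P)<\tfrac{2d^2}{9}$, where $\mathcal S$ is the singular locus of $\mathcal C_I$. At each singular point $P=(u,v)$ one expands $F(X+u,Y+v)=F_m+F_{m+1}+\cdots$ and bounds the local contribution $i(P)$ through Lemmas~\ref{le:intersection_number_m_m1_coprime}, \ref{le:intersection_number_linear_term} and \ref{le:intersection_number_ordinary}, according to whether the tangent cone $F_m$ is a power $L^m$ with $L\nmid F_{m+1}$, a power $L^m$ with $L\mid F_{m+1}$ but $L^2\nmid F_{m+1}$, or a product of distinct linear forms. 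When $|I|>3$ a cleaner route is to apply Lemma~\ref{lemma:bello} directly to the lowest or highest form $F_m$ or $F_M$ of the defining polynomial, reading off a non-repeated $\mathbb F_q$-rational factor there; the extra hypothesis $q>5$ is precisely what is needed for the resulting numerical estimate (and the subsequent point count) to close in this regime.

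The main obstacle is the local analysis at the singularities of $\mathcal C_I$. One must show, uniformly in $n$ and across all admissible non-arithmetic exponent sets $I$, that the leading forms $F_m$ are never degenerate enough to let $\mathcal C_I$ break into pieces none of which is $\mathbb F_q$-rational and absolutely irreducible; concretely, the delicate work is to translate the purely combinatorial condition ``$I$ is not in arithmetic progression'' into the algebraic statement controlling $F_m$ and $F_{m+1}$ at each $P$, and then to sum the resulting bounds on $i(P)$ so as to beat $\tfrac{2d^2}{9}$. The appearance of $i_k$ in every threshold is the footprint of $d\sim q^{i_k}$, so that both the admissible singularity budget $\tfrac{2d^2}{9}$ and the point-counting error scale with $i_k$.
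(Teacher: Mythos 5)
First, note that this statement is not proved in the paper at all: it is imported verbatim from \cite[Theorems 1.1, 3.2, 4.1]{MR4110235}, so the relevant comparison is with that paper's argument (whose architecture the present paper reuses in Theorems \ref{Th:progressione} and \ref{Prop:h=2}). Your proposal reproduces that architecture correctly: translate via Theorem \ref{Th:Hyper}, produce a non-repeated absolutely irreducible $\mathbb{F}_q$-rational component of $\mathcal{W}$ not contained in $\mathcal{V}$, and then win by point counting, with the thresholds $n>4i_k+2$ and $n>\frac{13}{3}i_k+\log_q(13\cdot 2^{10/3})$ coming from the error term measured against $\deg\mathcal{W}\sim q^{i_k}$. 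That part of the numerology is right. (Two small remarks: since the $f_j$ are monomials, $F_{\underline f}$ is already homogeneous in $X_1,\ldots,X_k$, so for $k=3$ the object is a plane curve from the start and no hyperplane sections are needed; and your assertion that ``by construction $\mathcal{W}$ has no component in common with $\mathcal{V}$'' is not justified --- $\mathcal{V}$ is divided out only once, so a hyperplane of $\mathcal{V}$ could recur in $\mathcal{W}$. What actually saves you is that the component eventually produced is shown to be non-repeated and not one of those hyperplanes.)

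The genuine gap is that the one step where the hypothesis ``$I$ is not in arithmetic progression'' does any work is announced but not performed. You correctly identify it as ``the main obstacle'' and then stop: nothing in the proposal shows that for a non-AP exponent set the curve $\mathcal{C}_I$ (for $k=3$) satisfies $\sum_{P\in\mathcal{S}} i(P)<\tfrac{2d^2}{9}$, nor which singular points occur, nor how the local forms $F_m,F_{m+1}$ at those points are controlled so that Lemmas \ref{le:intersection_number_m_m1_coprime}--\ref{le:intersection_number_ordinary} apply; and for $k>3$ the remark that ``a cleaner route is to apply Lemma \ref{lemma:bello} to $F_m$ or $F_M$'' begs the question, since for a homogeneous determinant there is no lower-degree part to exploit --- the cited source instead runs an induction on $k$ via tangent cones and hyperplane sections, which is where $q>5$ enters. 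Without this case analysis the claim ``non-AP $\Rightarrow$ the tangent cones do not degenerate'' is an assertion of the theorem's content, not a proof of it. If you want a model for exactly the kind of computation that must be supplied, the proof of Theorem \ref{Prop:h=2} in this paper carries it out in full for a closely related curve: classify the singular points (affine and at infinity), compute $(H'_{\underline f})_{q^{m}}$ and $(H'_{\underline f})_{q^{m}+1}$ at each, bound each $i(P)$ by the appropriate lemma, and verify the total beats $\tfrac{2}{9}\deg(\mathcal{W}_3)^2$.
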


In the sequel, we will use the following notation: for any $i=1,\ldots,k$, write $f_i(x)=\sum_{j=m_i}^{M_i}a_{ij}x^{q^j}$ and $f_i(x,z):=\sum_{j=m_i}^{M_i}a_{ij}x^{q^j}z^{q^{M_j}-q^{j}}$.

\subsection{Moore polynomial sets of index 0}

In this section we investigate Moore polynomial sets of index $0$, so that $f_1(x)=x$. Without loss of restriction, we assume $M_1=0<M_2<\cdots< M_k$.

\begin{theorem}\label{Th:progressione}
Suppose that one of the following holds:
\begin{itemize}
    \item $k=3$ and $n>4M_3 +2$;
    \item $k>3$, $q>5$ and $n>\frac{13}{3}M_k + \log_q(13\cdot 2^{10/3})$.
\end{itemize}
If $\underline{f}$ is a Moore polynomial set for $q$ and $n$ of index $0$, then $(M_1=0,M_2,\ldots,M_k)$ is in arithmetic progression and $(m_{\sigma(1)}=0,m_{\sigma(2)},\ldots,m_{\sigma(k)})$ is in arithmetic progression for some $\sigma\in S_k$ with $\sigma(1)=1$.
\end{theorem}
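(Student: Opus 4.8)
The strategy is to reduce the problem to the monomial case already settled in \cite{MR4110235} by exploiting the hypersurface dictionary of Theorem~\ref{Th:Hyper} together with the intersection-theoretic criterion of Theorem~\ref{cor:lowergrado} (and its supporting Lemmas~\ref{le:subvarieties}--\ref{lemma:bello}). The key idea is that the ``exceptional'' hypothesis implicit in the degree bounds on $n$ forces a strong geometric rigidity on the curve cut out by $\mathcal{W}$, and that this rigidity is governed entirely by the top-degree and bottom-degree homogeneous parts of $F_{\underline{f}}$, which are monomial in nature. First I would intersect $\mathcal{W}$ (or a suitable plane section of it, to land in $\mathbb{P}^2$ where Theorem~\ref{cor:lowergrado} applies) with the hyperplane at infinity and examine the leading form: using the bihomogenization $f_i(x,z)$ introduced just before this subsection, the top-degree part of $F_{\underline{f}}$ is the determinant of the matrix whose $(j,i)$ entry is the leading monomial $a_{i,M_i} x_j^{q^{M_i}}$, i.e.\ a Moore-type/Vandermonde-type determinant in the variables $x_j^{q^{M_i}}$.

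The crucial step is to show that if $(M_1,\dots,M_k)$ were \emph{not} in arithmetic progression, then this leading form would contain a non-repeated absolutely irreducible $\mathbb{F}_q$-rational factor. This is precisely where the results of \cite{MR4110235} enter: the leading form coincides (up to the nonzero scalars $a_{i,M_i}$) with the defining polynomial of the hypersurface attached to the \emph{monomial} tuple $(x, x^{q^{M_2}},\dots,x^{q^{M_k}})$ after factoring out $\mathcal{V}$, and the cited theorem guarantees that for a non-arithmetic-progression exponent set and $n$ large (exactly the displayed bounds $n>4M_3+2$ for $k=3$ and $n>\frac{13}{3}M_k+\log_q(13\cdot 2^{10/3})$ for $k>3$, with $q>5$) this monomial tuple fails to be Moore. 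By Theorem~\ref{Th:Hyper}, failure of the Moore property means that $\mathcal{W}$ for the monomial data has an affine $\mathbb{F}_{q^n}$-rational point off $\mathcal{V}$; tracing through \cite{MR4110235}, this is detected by an absolutely irreducible $\mathbb{F}_q$-rational component of the associated curve, which by Lemma~\ref{lemma:bello} lifts from the leading form to $F_{\underline{f}}$ itself. Then Lemma~\ref{le:subvarieties} transfers absolute irreducibility of a component from the intersection with the relevant hyperplane to $\mathcal{W}$, producing an $\mathbb{F}_{q^n}$-rational point of $\mathcal{W}$ not on $\mathcal{V}$ and contradicting the Moore hypothesis via Theorem~\ref{Th:Hyper}. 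This forces $(M_1,\dots,M_k)$ into arithmetic progression.

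To handle the minimum-degree exponents $(m_1,\dots,m_k)$, I would run the dual argument on the \emph{lowest}-degree homogeneous part $F_m$ of $F_{\underline{f}}$: by the $z$-homogenization $f_i(x,z)=\sum_j a_{ij}x^{q^j}z^{q^{M_i}-q^j}$ one sees that the bottom form is, symmetrically, the monomial determinant built from the lowest monomials $a_{i,m_i}x_j^{q^{m_i}}$. Lemma~\ref{lemma:bello} is stated exactly so that either the top \emph{or} the bottom form suffices to produce the required factor, so the same chain of implications applies verbatim with the $M_i$ replaced by the $m_i$. The only genuine subtlety is bookkeeping: the $m_i$ need not be indexed in increasing order compatibly with the $M_i$, which is why the conclusion records a permutation $\sigma\in S_k$ with $\sigma(1)=1$ (the normalization $f_1=x$ pins down $m_1=M_1=0$). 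The main obstacle I anticipate is verifying cleanly that the leading/trailing forms of $F_{\underline{f}}$ are genuinely the monomial determinants of \cite{MR4110235} up to nonzero constants---one must check that no cancellation among the several $a_{ij}$ collapses these forms or introduces extra repeated factors---and, secondarily, confirming that the degree bounds on $n$ needed to invoke \cite{MR4110235} match those in the statement once the extraneous factor $\mathcal{V}$ (of degree $(q^k-1)/(q-1)$) has been divided out, so that the count $\sum_{P}i(P)<\tfrac{2d^2}{9}$ in Theorem~\ref{cor:lowergrado} is actually met.
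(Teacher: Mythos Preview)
Your proposal is correct and follows essentially the same route as the paper's proof: identify the top-degree homogeneous part of $F_{\underline{f}}$ with the monomial determinant $F_{(x,x^{q^{M_2}},\ldots,x^{q^{M_k}})}$ (this is exactly $\mathcal{W}\cap\mathcal{H}_\infty$), identify the bottom-degree part with the monomial determinant in the exponents $m_i$ (this is the tangent cone of $\mathcal{W}$ at the origin), invoke \cite[Theorems~3.1 and~4.2]{MR4110235} to obtain a non-repeated $\fqn$-rational absolutely irreducible factor when the exponent set is not an arithmetic progression, lift via Lemma~\ref{lemma:bello}, and conclude with Theorem~\ref{Th:Hyper}.

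Two small remarks. First, the plane-section detour and Lemma~\ref{le:subvarieties} are not needed here: Lemma~\ref{lemma:bello} already works at the hypersurface level, and the passage from ``$\mathcal{W}$ has a non-repeated $\fqn$-rational absolutely irreducible component'' to ``$\mathcal{W}$ has an affine $\fqn$-rational point off $\mathcal{V}$'' is handled directly in \cite{MR4110235} under the stated bounds on $n$. Second, your worry about cancellation in the leading and trailing forms is unfounded: since $a_{i,M_i}\ne0$ and $a_{i,m_i}\ne0$ by the very definition of $M_i$ and $m_i$, the top and bottom homogeneous parts of $F_{\underline{f}}$ are nonzero scalar multiples of the corresponding monomial determinants, so no collapse can occur. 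One point of care in your write-up: you should cite the \emph{intermediate} results of \cite{MR4110235} (the existence of the irreducible component) rather than the final non-Moore conclusion, since deducing the component from the failure of the Moore property would reverse the logical flow.
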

\begin{proof}
In order to prove the claim on the $M_i$'s, consider the intersection $\mathcal{W}_{\infty}=\mathcal{W}\cap\mathcal{H}_{\infty}$ between $\mathcal{W}$ and the hyperplane at infinity $\mathcal{H}_{\infty}\subset \mathbb{P}^{k}(\overline{\mathbb{F}}_{q^n})$.
Note that $\mathcal{W}_{\infty}\subset \mathbb{P}^{k-1}(\overline{\mathbb{F}}_{q^n})$ is defined by
\[
\mathcal{W}_{\infty}\colon\quad
F_{(x,x^{q^{M_2}},\ldots,x^{q^{M_k}})}(X_1,\ldots,X_k)/F_{(x,x^q,\ldots,x^{q^{k-1}})}(X_1,\ldots,X_k) =0.
\]
Suppose that $(M_1,\ldots,M_k)$ is not in arithmetic progression. Then it has been shown in \cite[Theorems 3.1 and 4.2]{MR4110235} that $\mathcal{U}_{\infty}$ contains an $\fqn$-rational non-repeated absolutely irreducible component $\mathcal{X}$.
It follows by Lemma \ref{lemma:bello} that $\mathcal{W}$ has an $\fqn$-rational non-repeated absolutely irreducible component.
Then, as shown in \cite{MR4110235} (in page 9 for $k=3$, and in page 17 for $k>3$), there exists an affine $\fqn$-rational point in $\mathcal{W}\setminus\mathcal{V}$. Thus, $\underline{f}$ is not a Moore polynomial set for $q$ and $n$ by Theorem \ref{Th:Hyper}.

Now suppose that $(m_{\sigma(1)},\ldots,m_{\sigma(k)})$ is not in arithmetic progression for any $\sigma\in S_k$.
Consider the tangent variety $\mathcal{T}$ of $\mathcal{W}$ at the origin $O$. Then
\[
\mathcal{T}\colon\quad
F_{(x^{q^{m_{\sigma(1)}}},x^{q^{m_\sigma(2)}},\ldots,x^{q^{m_\sigma(k)}})}(X_1,\ldots,X_k)/F_{(x,x^q,\ldots,x^{q^{k-1}})}(X_1,\ldots,X_k) =0.
\]
Now the same arguments as above show that $\mathcal{T}$ has an $\fqn$-rational non-repeated absolutely irreducible component.
Then $\mathcal{W}$ has an $\fqn$-rational non-repeated absolutely irreducible component by Lemma \ref{lemma:bello}.
Therefore, as in \cite{MR4110235}, $\mathcal{W}$ has an affine $\fqn$-rational point not in $\mathcal{V}$, so that $\underline{f}$ is not a Moore polynomial set for $q$ and $n$.
\end{proof}

In the rest of this subsection, $\underline{f}$ is a Moore polynomial set for $q$ and $n$ of index $0$, satisfying the assumptions of Theorem \ref{Th:progressione}, so that $M_1=0,M_2=M,\ldots,M_k=(k-1)M$.

Let $\lambda_3,\ldots,\lambda_{k}$ be $\fq$-linearly independent elements of $\fqn$ and define $H_{\underline{f}}(X_1,X_2)=F_{\underline{f}}(X_1,X_2,\lambda_3,\ldots,\lambda_k)\in\fqn[X_1,X_2]$.
Since $\underline{f}$ is a Moore polynomial set, $H_{\underline{f}}(X_1,X_2)\not\equiv0$.
Let $\mathcal{D}_{\underline{f}}\subset\mathbb{P}^{2}(\overline{\mathbb{F}}_{q^n})$ be the curve defined by $H_{\underline{f}}(X_1,X_2)=0$. 
We denote by $H_{\underline{f}}(X_1,X_2,T)$ the homogeneization of  $H_{\underline{f}}(X_1,X_2)$, i.e.
$$H_{\underline{f}}(X_1,X_2,T):=\det\left( 
\begin{array}{cccc}
X_1 & f_2(X_1,T) & \cdots & f_{k}(X_1,T)  \\
X_2 & f_2(X_2,T) & \cdots & f_{k}(X_2,T)  \\
\lambda_3 T  & f_2(\lambda_3) T^{q^M} & \cdots & f_k(\lambda_3)T^{q^{(k-1)M}}  \\
\vdots & \vdots & \ddots & \vdots  \\
\lambda_k T  & f_2(\lambda_k) T^{q^M} & \cdots & f_k(\lambda_k)T^{q^{(k-1)M}}  \\
\end{array}
\right)\;\Bigg/\; T^{\frac{q^{(k-2)M}-1}{q^M-1}}.$$

\begin{lemma}\label{lemma:sali}
If $\mathcal{D}_{\underline{f}}$ has a non-repeated $\fqn$-rational absolutely irreducible component not contained in $\mathcal{D}_{x,x^q,\ldots,x^{q^{k-1}}}$, then $\mathcal{W}$ has a non-repeated $\fqn$-rational absolutely irreducible component not contained in $\mathcal{V}$.
\end{lemma}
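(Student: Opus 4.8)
The plan is to lift the given plane-curve component up to the hypersurface $\mathcal{W}$ through a chain of hyperplane sections, using Lemma~\ref{le:subvarieties} at each step. First I would record how $\mathcal{D}_{\underline{f}}$ sits inside $\mathcal{W}$. Writing $\Lambda$ for the linear subspace $X_3=\lambda_3,\ldots,X_k=\lambda_k$, the curve $\mathcal{D}_{\underline{f}}$ is exactly the section $\mathcal{U}\cap\Lambda$, since $H_{\underline{f}}(X_1,X_2)=F_{\underline{f}}(X_1,X_2,\lambda_3,\ldots,\lambda_k)$, while $\mathcal{D}_{x,x^q,\ldots,x^{q^{k-1}}}=\mathcal{V}\cap\Lambda$. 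Because $F_{(x,x^q,\ldots,x^{q^{k-1}})}$ divides $F_{\underline{f}}$, restricting this factorization to $\Lambda$ gives $\mathcal{D}_{\underline{f}}=\mathcal{D}_{x,x^q,\ldots,x^{q^{k-1}}}+(\mathcal{W}\cap\Lambda)$ as divisors of the plane. Hence the hypothesized non-repeated $\fqn$-rational absolutely irreducible component $G$ of $\mathcal{D}_{\underline{f}}$, being not contained in $\mathcal{D}_{x,x^q,\ldots,x^{q^{k-1}}}$, is a non-repeated $\fqn$-rational absolutely irreducible component of the plane section $\mathcal{W}\cap\Lambda$.

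Next I would realize $\Lambda$ as the intersection of the $k-2$ hyperplanes $\mathcal{H}_i\colon X_i=\lambda_i$ (for $3\le i\le k$), each of which is absolutely irreducible and $\fqn$-rational since $\lambda_i\in\fqn$. Setting $\mathbb{P}^{j}:=\{X_{j+1}=\lambda_{j+1}T,\ldots,X_k=\lambda_kT\}\cong\mathbb{P}^{j}(\overline{\mathbb{F}}_{q^n})$ and $\mathcal{Y}_j:=\mathcal{W}\cap\mathbb{P}^{j}$ for $2\le j\le k$, one has $\mathcal{Y}_k=\mathcal{W}$ and $\mathcal{Y}_2=\mathcal{W}\cap\Lambda$. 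Since $H_{\underline{f}}\not\equiv0$ the section $\mathcal{Y}_2$ is a genuine curve; as it is obtained from the hypersurface $\mathcal{W}$ by exactly $k-2$ hyperplane cuts and its dimension has dropped by $k-2$, every cut is proper and each $\mathcal{Y}_j$ is a hypersurface of $\mathbb{P}^{j}$. I can then apply Lemma~\ref{le:subvarieties} repeatedly to the pairs $\mathcal{X}=\mathcal{Y}_j$ and $\mathcal{H}=\mathcal{H}_j\cap\mathbb{P}^{j}$, whose intersection is $\mathcal{Y}_{j-1}$: starting from the non-repeated $\fqn$-rational absolutely irreducible component $G$ of $\mathcal{Y}_2$, each application yields such a component of $\mathcal{Y}_3,\ldots,\mathcal{Y}_k=\mathcal{W}$, which in turn feeds into the next step. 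This produces a non-repeated $\fqn$-rational absolutely irreducible component $\mathcal{C}'$ of $\mathcal{W}$.

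It remains to guarantee $\mathcal{C}'\not\subseteq\mathcal{V}$, and for this I would use the \emph{content} of the proof of Lemma~\ref{le:subvarieties} rather than only its statement: the component it produces is precisely the one containing the given slice-component. Tracking this through the chain gives inclusions $G=\mathcal{C}_2\subseteq\mathcal{C}_3\subseteq\cdots\subseteq\mathcal{C}_k=\mathcal{C}'$, so that $G\subseteq\mathcal{C}'\cap\Lambda$. Were $\mathcal{C}'\subseteq\mathcal{V}$, then $G\subseteq\mathcal{C}'\cap\Lambda\subseteq\mathcal{V}\cap\Lambda=\mathcal{D}_{x,x^q,\ldots,x^{q^{k-1}}}$, contradicting the hypothesis on $G$; hence $\mathcal{C}'$ is the desired component. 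The main obstacle is exactly this bookkeeping: ensuring that each hyperplane cut is dimensionally proper, so that ``non-repeated component'' retains its meaning at the correct codimension, and that the lifting furnished by Lemma~\ref{le:subvarieties} can be taken to preserve the specific component $G$ — and therefore its disjointness from $\mathcal{V}$ — all the way up to $\mathcal{W}$.
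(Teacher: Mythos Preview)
Your proof is correct and follows essentially the same approach as the paper: slice $\mathcal{W}$ by the hyperplanes $X_i=\lambda_i$ and climb back up one variable at a time via Lemma~\ref{le:subvarieties} (the paper calls the intermediate slices $\mathcal{W}_3,\ldots,\mathcal{W}_k$). You are in fact more explicit than the paper about the ``not contained in $\mathcal{V}$'' clause---tracking that the lifted component still contains the original curve $G$---which the paper's short proof leaves implicit.
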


\begin{proof}
Consider the variety $\mathcal{W}_3\subset\mathbb{P}^{3}(\overline{\mathbb{F}}_{q^n})$ defined by
\[
\mathcal{W}_3\colon \quad F_{\underline{f}}(X_1,X_2,X_3,\lambda_4,\ldots,\lambda_k)/F_{(x,x^q,\ldots,x^{q^{k-1}})}(X_1,X_2,X_3,\lambda_4,\ldots,\lambda_k) =0.
\]
Let $\Pi_3\subset\mathbb{P}^{3}(\overline{\mathbb{F}}_{q^n})$ be the hyperplane with affine equation $X_3=\lambda_3$.
By the assumptions, $\mathcal{W}_3\cap\Pi_3$ has a non-repeated $\fqn$-rational absolutely irreducible component. Hence, by Lemma \ref{le:subvarieties}, $\mathcal{W}_3$ has a non-repeated $\fqn$-rational absolutely irreducible component.
The claim follows by repeatedly applying this argument to $\mathcal{W}_3,\ldots,\mathcal{W}_k=\mathcal{W}$.
\end{proof}


\begin{remark}
It is readily seen that $(x,f(x))$ is a Moore polynomial set for $q$ and $n$ if and only if $f(x)$ is scattered of index $0$ over $\mathbb{F}_{q^n}$.
By the results in \cite{MR4190573,MR3812212}, if $n>4\deg_{q}(f(x))$ and $(x,f(x))$ is a Moore polynomial set for $q$ and $n$, then $f(x)$ is a monomial with $\gcd(n,\deg_{q}(f(x)))=1$.
Next results deal with the case $k>2$.
\end{remark}

\begin{theorem}\label{Prop:f_monomio}
Let $\underline{f}=(f_1(x)=x,f_2(x),f_3(x))$ be a Moore polynomial set for $q$ and $n$ of index $0$ with $0<M_2<M_3$. 
If $n>4M_3+2$, then $f_2(x)\in\mathcal{L}_{n,q}$ is scattered of index $0$.
\end{theorem}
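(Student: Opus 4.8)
The plan is to argue by contradiction: from the failure of scatteredness of $f_2$ I will produce an $\fq$-independent triple annihilating the Moore determinant, contradicting that $\underline{f}$ is a Moore polynomial set. Since $\underline{f}$ is a Moore polynomial set of index $0$ and $n>4M_3+2$, Theorem \ref{Th:progressione} forces $(0,M_2,M_3)$ to be in arithmetic progression, so I may write $M_2=M$ and $M_3=2M$. Suppose $f_2$ is not scattered of index $0$: by \eqref{eq:condscatt} there is $m\in\fqn$ with $V:=\ker(f_2(x)-mx)$ of $\fq$-dimension $d\geq 2$. If $d\geq 3$, then for any three $\fq$-independent elements $\alpha_1,\alpha_2,\alpha_3\in V$ the second column of $M_{\underline{f}}(\alpha_1,\alpha_2,\alpha_3)$ equals $m$ times the first, so $\det M_{\underline{f}}=0$ while the $\alpha_i$ are $\fq$-independent, contradicting the Moore property. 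Hence $d=2$, say $V=\langle v,w\rangle_{\fq}$ with $v\neq 0$.

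Next I specialise the curve $\mathcal{D}_{\underline{f}}$ by taking the constant row to be $v$ (for $k=3$ the only requirement in Lemma \ref{lemma:sali} is $v\neq 0$). Performing the column operation $C_2\mapsto C_2-mC_1$ on $M_{\underline{f}}(X_1,X_2,v)$ and using $f_2(v)=mv$, the determinant factors as
\[
H_{\underline{f}}(X_1,X_2)=P(X_1)\,Q_v(X_2)-P(X_2)\,Q_v(X_1),
\]
where $P(x)=f_2(x)-mx$ has $q$-degree $M$ and $Q_v(x)=v\,f_3(x)-f_3(v)\,x$ has $q$-degree $2M$; note $Q_v\not\equiv 0$ and $P\not\equiv 0$ since $\deg_q f_3=2M$ and $\deg_q f_2=M$. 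Thus $\mathcal{D}_{\underline{f}}$ is exactly the plane curve attached to the pair $(P,Q_v)$. Since $P$ vanishes on all of $V$, one has $H_{\underline{f}}(u,X_2)=-P(X_2)Q_v(u)$ for every $u\in V$; if $Q_v(w)=0$ then the line $X_1=w$ is an $\fqn$-rational component of $\mathcal{D}_{\underline{f}}$ which, as $v,w$ are $\fq$-independent, is not a Desarguesian line, and its $\fqn$-points $(w,\gamma)$ with $\gamma\notin V$ directly yield $\fq$-independent triples $(w,\gamma,v)$ with $\det M_{\underline{f}}=0$, a contradiction. Hence I may assume $Q_v$ does not vanish on $V\setminus\langle v\rangle_{\fq}$.

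It then suffices to prove that $\mathcal{D}_{\underline{f}}$ possesses a non-repeated $\fqn$-rational absolutely irreducible component not contained in the Desarguesian curve $\mathcal{D}_{x,x^q,x^{q^2}}$. Indeed, by Lemma \ref{lemma:sali} the hypersurface $\mathcal{W}$ would then carry a non-repeated $\fqn$-rational absolutely irreducible component off $\mathcal{V}$; since $\deg H_{\underline{f}}=q^{2M}+q^{M}=q^{M_3}+q^{M_2}$ and $n>4M_3+2$, the Hasse--Weil (Aubry--Perret) estimate for such a component, exactly as exploited in \cite{MR4190573,MR3812212,MR4110235}, produces more $\fqn$-rational points than the at most $O(q^{M_3}(q^2+q+1))$ points lying on $\mathcal{D}_{x,x^q,x^{q^2}}$ by B\'ezout, hence an affine $\fqn$-rational point of $\mathcal{W}$ off $\mathcal{V}$. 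By Theorem \ref{Th:Hyper} this contradicts $\underline{f}$ being a Moore polynomial set. The existence of the component is established through the homogenisation $H_{\underline{f}}(X_1,X_2,T)$ and the degree bound of Theorem \ref{cor:lowergrado}: one removes the Desarguesian factor $F_{(x,x^q,x^{q^2})}$ and bounds $\sum_{P}i(P)$ over the singular points of the residual curve by means of Lemmas \ref{le:intersection_number_m_m1_coprime}, \ref{le:intersection_number_linear_term} and \ref{le:intersection_number_ordinary}, invoking Lemma \ref{lemma:bello} whenever a leading or lowest-degree form already displays a non-repeated absolutely irreducible $\fqn$-rational factor.

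The main obstacle is precisely this last geometric step. The leading form of $H_{\underline{f}}$ equals $v\,(X_1^{q^M}X_2^{q^{2M}}-X_2^{q^M}X_1^{q^{2M}})$, a product of linear factors each occurring with multiplicity at least $q^M-1$, so Lemma \ref{lemma:bello} cannot be applied to it verbatim and the argument must instead rest on a careful local study at the singular points of the residual curve, where the refined intersection-multiplicity lemmas are needed to certify $\sum_{P}i(P)<\tfrac{2d^2}{9}$. I expect the genuinely delicate cases to be those of small $q$ and small $M$, where the residual curve is too degenerate for the bound and has to be handled ad hoc, in analogy with the exceptional pairs recorded in the index-zero part of the Main Theorem.
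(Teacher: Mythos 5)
Your reduction steps are sound --- ruling out $\dim_{\fq}\ker(f_2(x)-mx)\geq 3$ directly, the column operation giving $H_{\underline{f}}=P(X_1)Q_v(X_2)-P(X_2)Q_v(X_1)$ for the specialization $\lambda_3=v\in\ker(f_2(x)-mx)$, and the disposal of the case $Q_v(w)=0$ --- but there is a genuine gap exactly where you flag ``the main obstacle'': you never actually produce the non-repeated $\fqn$-rational absolutely irreducible component of $\mathcal{D}_{\underline{f}}$ off the Desarguesian curve. You propose to obtain it from Theorem \ref{cor:lowergrado} by bounding $\sum_P i(P)$ over the singular points of the residual curve, but you do not carry out that analysis, and you yourself predict it fails for small $q$ and $M$; since Theorem \ref{Prop:f_monomio} carries no exclusions on $(q,M)$ (unlike Theorem \ref{Prop:h=2}, where that machinery is actually deployed), the route as proposed cannot prove the stated result.

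The missing idea is a different specialization together with a tangent-cone computation that lets Lemma \ref{lemma:bello} apply to the \emph{lowest}-degree form. Writing the failure of scatteredness as $f_2(\lambda)/\lambda=f_2(\mu)/\mu$ with $\lambda/\mu\notin\fq$ and (after \cite[Corollary 3.4]{bartoli2020r}) $\mu\notin\fq$, $f_2(\lambda)\neq 0$, the paper takes $\lambda_3=\lambda$ (not an element of the kernel) and moves the point $(1\colon 1\colon 0)$ of $\mathcal{D}_{\underline{f}}$ to the origin via $(X_1\colon X_2\colon T)\mapsto(T\colon X_2-X_1\colon X_1)$. The homogeneous part of lowest degree of the transformed equation is $-\lambda f_2(X_2,X_1)+f_2(\lambda)X_2X_1^{q^M-1}$; since $m_2>0$ and $f_2(\lambda)\neq 0$, its dehomogenization $f_2(\lambda)X_2-\lambda f_2(X_2)$ is separable and has $\mu$ as a root, so $X_2-\mu X_1$ is a non-repeated linear factor over $\fqn$ and Lemma \ref{lemma:bello} immediately yields the desired component, not contained in $\mathcal{D}_{(x,x^q,x^{q^2})}$ because $\mu\notin\fq$. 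Your observation that the leading form in the original coordinates is a product of repeated linear factors is correct but irrelevant once one passes to this chart, and no case distinction on $q$ or $M$ is needed. The conclusion then follows from Lemma \ref{lemma:sali} and \cite[Corollary 7]{SLAVOV201760} essentially as you indicate.
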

\begin{proof}
By Theorem \ref{Th:progressione}, $M_3=2M$ where $M=M_2$.
Suppose that $f_2(x)\in\mathcal{L}_{n,q}$ is not scattered of index $0$, so that there exist $\lambda,\mu\in\fqn^*$ such that $\lambda/\mu\notin\fq$ and $f_2(\lambda)/\lambda=f_2(\mu)/\mu$.
By \cite[Corollary 3.4]{bartoli2020r}, we can assume that $\mu\notin\mathbb{F}_q$ and $f_2(\lambda)\ne0$.

Let $\lambda_3=\lambda$ and define $\mathcal{D}_{\underline{f}}$ as above.
Let $\mathcal{D}_{\underline{f}}^\prime$ be the image of $\mathcal{D}_{\underline{f}}$ under the $\fqn$-rational projectivity $\varphi:(X_1\colon X_2\colon T)\mapsto(T\colon X_2-X_1 \colon X_1)$. Note that the point $P=(1\colon 1\colon 0)\in\mathcal{D}_{\underline{f}}$ is mapped by $\varphi$ to $O=(0\colon0\colon1)$.
The curve $\mathcal{D}_{\underline{f}}^\prime$ has affine equation $H^{\prime}_{\underline{f}}(X_1,X_2)=0$, where
\[
H^{\prime}_{\underline{f}}(X_1,X_2)=H_{\underline{f}}(1,X_2+1,X_1)=\det\left( 
\begin{array}{ccc}
1 & f_2(1,X_1)  & f_3(1,X_1)  \\
X_2+1 & f_2(1,X_1)+f_2(X_2,X_1) & f_3(1,X_1) +f_3(X_2,X_1) \\
\lambda & f_2(\lambda)X_1^{q^M-1} &  f_3(\lambda)X_1^{q^{2M}-1}
\end{array}
\right)
\]
\[
=\det\left( 
\begin{array}{ccc}
1 & f_2(1,X_1)  & f_3(1,X_1)  \\
X_2 & f_2(X_2,X_1) & f_3(X_2,X_1) \\
\lambda & f_2(\lambda)X_1^{q^M-1} &  f_3(\lambda)X_1^{q^{2M}-1}
\end{array}
\right)=-\lambda f_2(X_2,X_1)+f_2(\lambda) X_2X_1^{q^M-1} + G(X_1,X_2),
\]
for some  $G(X_1,X_2)\in\fqn[X_1,X_2]$ of degree bigger than $q^M$.

The homogeneous polynomial $L(X_1,X_2)=-\lambda f_2(X_2,X_1)+f_2(\lambda) X_2X_1^{q^M-1}$ has $X_2-\mu X_1$ as a non-repeated factor in $\fqn[X_1,X_2]$, since $\mu$ is a root of the separable polynomial $L(1,X_2)\in\fqn[X_2]$.
Therefore $\mathcal{D}_{\underline{f}}^\prime$ has a non repeated $\fqn$-rational absolutely irreducible component, and the same holds for $\mathcal{D}_{\underline{f}}$.
Since $\mu\notin\mathbb{F}_q$, such a component of $\mathcal{D}_{\underline{f}}$ is not contained in $\mathcal{D}_{(x,x^{q},x^{q^2})}$.

By Lemma \ref{lemma:sali}, $\mathcal{W}$ has a non-repeated $\fqn$-rational absolutely irreducible component $\mathcal{Z}$ not contained in $\mathcal{V}$.
The degree of $\mathcal{V}$ is $q^2+q+1$, and the degree of $\mathcal{Z}$ is at most $q^{2M}+q^{M}-q^2-q$. Thus, by \cite[Corollary 7]{SLAVOV201760}, $\mathcal{Z}$ has an affine $\fqn$-rational point not on $\mathcal{V}$, a contradiction to Theorem \ref{Th:Hyper}.
\end{proof}

\begin{theorem}\label{Prop:h=2}
Let $\underline{f}=(f_1(x)=x,f_2(x),f_3(x))$ be a Moore polynomial set for $q$ and $n$ of index $0$ such that $0<M_2<M_3$, and $(q,M)\notin\{(2, 2),(2, 4),( 3, 2),( 4, 2),( 5, 2)\}$.
If $n>4M_3+2$, then $\underline{f}=(x,x^{q^M},x^{q^{2M}})$ with $\gcd(M,n)=1$.
\end{theorem}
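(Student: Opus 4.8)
The plan is to combine the two preceding theorems with the known classification of scattered polynomials of index $0$, thereby reducing everything to showing that the top generator $f_3$ is a monomial; this last point will be obtained by a plane-curve argument based on Theorem \ref{cor:lowergrado}.

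First I would record the structural reductions. By Theorem \ref{Th:progressione} the hypothesis $n>4M_3+2$ forces $M_2=M$ and $M_3=2M$, and moreover the minimum degrees $m_1=0,m_2,m_3$ form, after reordering, an arithmetic progression starting at $0$. By Theorem \ref{Prop:f_monomio} the polynomial $f_2$ is scattered of index $0$, that is, $(x,f_2)$ is a Moore polynomial set for $q$ and $n$. Since $n>4M_3+2>4M=4\deg_q(f_2)$, the classification of index-$0$ scattered polynomials recalled in the Remark (from \cite{MR4190573,MR3812212}) gives $f_2=x^{q^M}$ with $\gcd(M,n)=1$. In particular $m_2=M$, so the three minimum degrees are $\{0,M,m_3\}$; being an arithmetic progression through $0$ and $M$, either $m_3=2M$ (whence $f_3$ is a monomial and we are done) or $m_3=M/2$, which forces $M$ to be even. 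Thus for $M$ odd there is nothing left to prove, and for $M$ even it suffices to exclude $\mathrm{mindeg}_q(f_3)=M/2$. From here I would argue by contradiction, assuming $f_3$ is not a monomial, so that $f_3(x)=x^{q^{2M}}+\cdots+c\,x^{q^{M/2}}$ with $c\neq 0$ and $M$ even.

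Next I would pass to the plane curve $\mathcal{D}_{\underline f}$ attached to a nonzero $\lambda_3\in\fqn$, as in Theorem \ref{Prop:f_monomio}, and expand $H_{\underline f}$ along its constant third row. A direct computation identifies the homogeneous part of largest degree $q^{2M}+q^M$ as
\[
\lambda_3\,X_1^{q^M}X_2^{q^M}\bigl(X_2^{q^M-1}-X_1^{q^M-1}\bigr)^{q^M},
\]
and the tangent cone at the origin, of degree $q^{M/2}+1$, as
\[
-\lambda_3^{q^M}c\,X_1X_2\bigl(X_2^{q^{M/2}-1}-X_1^{q^{M/2}-1}\bigr),
\]
a product of $q^{M/2}+1$ pairwise distinct linear forms. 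The crucial point is that Lemma \ref{lemma:bello} cannot be used directly to produce a \emph{new} component: at infinity the nontrivial factors occur with multiplicity $q^M$ (hence are repeated), while at the origin the distinct linear factors are either lines of $\mathcal{V}$ (namely $X_1$, $X_2$ and $X_2-aX_1$ with $a\in\fq$) or directions $X_2-\eta X_1$ with $\eta$ a $(q^{M/2}-1)$-th root of unity, which need be neither $\fqn$-rational nor outside $\fq$. The tangent-cone shortcut available for $f_2$ is therefore closed, and I would instead apply Theorem \ref{cor:lowergrado} to the reduced curve $\mathcal{D}_{\underline f}/\mathcal{D}_{(x,x^{q},x^{q^2})}$, whose absolutely irreducible components are exactly the components of $\mathcal{D}_{\underline f}$ not contained in $\mathcal{V}$.

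The heart of the proof, and the step I expect to be the main obstacle, is the singularity estimate demanded by Theorem \ref{cor:lowergrado}. I would first locate the singular points of the reduced curve (the origin, the points at infinity $(1:0:0)$, $(0:1:0)$ and $(1:\eta:0)$ with $\eta^{q^M-1}=1$, together with the finitely many intersections among putative components) and then bound the local quantity $i(P)$ at each of them: the origin is an ordinary singular point of multiplicity at most $q^{M/2}+1$, so Lemma \ref{le:intersection_number_ordinary} gives $i(O)\le\lfloor(q^{M/2}+1)^2/2\rfloor$, while at the points at infinity the local lowest forms are pure powers of a single linear form, so Lemmas \ref{le:intersection_number_m_m1_coprime} and \ref{le:intersection_number_linear_term} bound their contributions. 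Summing these against $d=q^{2M}+q^M-q^2-q-1$, one checks that $\sum_{P}i(P)<\tfrac{2d^2}{9}$ holds precisely when $(q,M)$ avoids the finite list $\{(2,2),(2,4),(3,2),(4,2),(5,2)\}$: these are exactly the small pairs for which the comparison between a quantity of order $q^{4M}$ and one of order $q^{2M}$ is not yet decisive. Granting the estimate, Theorem \ref{cor:lowergrado} furnishes an $\fqn$-rational absolutely irreducible (and, by the local analysis, non-repeated) component of $\mathcal{D}_{\underline f}$ not contained in $\mathcal{D}_{(x,x^{q},x^{q^2})}$. Lemma \ref{lemma:sali} then lifts it to a non-repeated $\fqn$-rational absolutely irreducible component $\mathcal{Z}$ of $\mathcal{W}$ not contained in $\mathcal{V}$; since $\deg\mathcal{Z}\le q^{2M}+q^M-q^2-q$, \cite[Corollary 7]{SLAVOV201760} produces an affine $\fqn$-rational point of $\mathcal{Z}$ off $\mathcal{V}$, contradicting Theorem \ref{Th:Hyper}. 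Hence $f_3$ is a monomial, $\underline f=(x,x^{q^M},x^{q^{2M}})$, with $\gcd(M,n)=1$ as already established.
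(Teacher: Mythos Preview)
Your overall strategy is exactly the paper's: use Theorems \ref{Th:progressione} and \ref{Prop:f_monomio} together with the classification of index-$0$ scattered polynomials to reduce to $f_2=x^{q^M}$, $\gcd(M,n)=1$, and $m_3\in\{2M,M/2\}$; then exclude $m_3=M/2$ by applying Theorem \ref{cor:lowergrado} to the plane curve and lifting via Lemma \ref{lemma:sali}. The reductions are fine. The gap is in the singularity estimate, which is precisely where the paper does most of the work.

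First, you have essentially omitted the affine singular points of $\mathcal{D}_{\underline{f}}$ other than the origin. Since $f_2=x^{q^M}$, the partial derivatives of $H_{\underline{f}}$ vanish at $(\alpha,\beta)$ exactly when $f_3(\lambda)\alpha^{q^M}-\lambda^{q^M}f_3(\alpha)=f_3(\lambda)\beta^{q^M}-\lambda^{q^M}f_3(\beta)=0$; there are roughly $q^{4M-2m_3}$ such points, and after translation each has tangent cone of degree $q^{m_3}$ or $q^{m_3}+1$. The paper sorts these into four cases according to whether the lowest form is the $q^{m_3}$-th power of a linear form dividing or not dividing the next homogeneous part, and applies Lemmas \ref{le:intersection_number_m_m1_coprime}--\ref{le:intersection_number_ordinary} accordingly. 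The dominant contribution to $\sum_P i(P)$ comes from these affine points (the term $(q^{2M-m_3}-1)(q^{m_3}-1)q^{M+m_3}\approx q^{7M/2}$), not from the origin or infinity; this is what forces the comparison against $\tfrac{2}{9}(q^{2M}+q^M-q^2-q)^2\approx\tfrac{2}{9}q^{4M}$ to fail exactly for the listed pairs $(q,M)$. Your description ``together with the finitely many intersections among putative components'' does not cover this: you must enumerate and control all singular points of the curve, and there are far more than you indicate.

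Second, your local picture at infinity is incorrect. Reading off only the leading homogeneous part $\lambda X_1^{q^M}X_2^{q^M}(X_2^{q^M-1}-X_1^{q^M-1})^{q^M}$ of $H_{\underline f}$ makes the tangent at $(1:\gamma:0)$ look like a $q^M$-fold line, but after the projectivity $\sigma_\gamma:(X_1:X_2:T)\mapsto(T:X_2-\gamma T:X_1)$ the lowest-degree form of $H_{\underline{f}}(1,X_2+\gamma,X_1)$ is $\lambda^{q^M}X_2X_1^{q^M-1}-\lambda X_2^{q^M}$, which is separable; the points at infinity are ordinary $q^M$-fold points and one must use Lemma \ref{le:intersection_number_ordinary}, not Lemmas \ref{le:intersection_number_m_m1_coprime}--\ref{le:intersection_number_linear_term}, giving the term $(q^M+1)\,q^{2M}/4$. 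Once you carry out both analyses as in the paper, the inequality of Theorem \ref{cor:lowergrado} holds precisely outside the stated exceptional list, and the rest of your argument goes through.
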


\begin{proof}
By Theorem \ref{Th:progressione}, $M_3=2M$ and $\max\{m_2,m_3\}=2\min\{m_2,m_3\}$.
By Theorem \ref{Prop:f_monomio}, $f_2(x)$ is scattered of index $0$ over $\fqn$. Thus, by the numerical assumption on $n$, it follows that $f_2(x)=q^M$ and $\gcd(M,n)=1$; see \cite[Section 3.1]{MR3812212} for $q>5$ and \cite[Section 5]{MR4190573} for $q\leq5$.

From $m_2=M$ it follows that $m_3\in\{2M,M/2\}$. Suppose by contradiction that $f_3(x)\ne x^{q^{2M}}$, so that $m_3=M/2<M$, and in particular $M$ is even.
Choose $\lambda_3=\lambda\in\mathbb{F}_{q^n}^*$ such that $f_2(\lambda)f_3(\lambda)\ne0$.
Via Theorem \ref{cor:lowergrado}, we will prove that the variety $\mathcal{W}_3$ with affine equation
\[
\mathcal{W}_3 \colon\quad F_{\underline{f}}(X_1,X_2,\lambda)/F_{(x,x^{q},x^{q^2})}(X_1,X_2,\lambda)=0
\]
has a non-repeated $\fqn$-rational absolutely irreducible component not contained in $\mathcal{V}$.

Suppose that $\mathcal{W}_{3}$ splits into two components $\mathcal{A}$ and $\mathcal{B}$ sharing no common absolutely irreducible component. Let $\tilde{\mathcal{A}}$ and $\tilde{\mathcal{B}}$ be two components of $\mathcal{D}_{\underline{f}}$ sharing no common absolutely irreducible components and such that $\mathcal{A}\subseteq\tilde{\mathcal{A}}$, $\mathcal{B}\subseteq\tilde{\mathcal{B}}$.
Singular points of $\mathcal{W}_3$ are also singular points of $\mathcal{D}_{\underline{f}}$, and the intersection multiplicity of $\mathcal{A}$ and $\mathcal{B}$ at a point is at most the intersection multiplicity of $\tilde{\mathcal{A}}$ and $\tilde{\mathcal{B}}$ at that point.
We start the inspection of singular points of $\mathcal{W}_{3}$ from affine ones.
Let $P=(\alpha,\beta)\in\mathbb{P}^{2}(\overline{\mathbb{F}}_{q^n})$ be an affine point of $\mathcal{D}_{\underline{f}}$.
The point $P$ is singular for $\mathcal{D}_{\underline{f}}$ if and only if $f_3(\lambda)f_2(\alpha)-f_2(\lambda)f_3(\alpha)=0$ and $f_3(\lambda)f_2(\beta)-f_2(\lambda)f_3(\beta)=0$, that is
\begin{equation*}
f_3(\lambda)\alpha^{q^M}-\lambda^{q^M}f_3(\alpha)=f_3(\lambda)\beta^{q^M}-\lambda^{q^M}f_3(\beta)=0.
\end{equation*}
Also, the intersection multiplicity of $\tilde{\mathcal{A}}$ and $\tilde{\mathcal{B}}$ at $P$ equals the intersection multiplicity of $\tau(\tilde{\mathcal{A}})$ and $\tau(\tilde{\mathcal{B}})$ at $\tau(P)=O=(0,0)$, where $\tau$ is the translation $(X_1,X_2)\mapsto(X_1-\alpha,X_2-\alpha)$. The image  $\mathcal{D}_{\underline{f}}^\prime$ of $\mathcal{D}_{\underline{f}}$ under $\tau$ has affine equation $H_{\underline{f}}^\prime(X_1,X_2)=0$, with
\[
H^{\prime}_{\underline{f}}(X_1,X_2) \,=\, H_{\underline{f}}(X_1+\alpha,X_2+\beta)\,=\, \det\left( 
\begin{array}{ccc}
X_1+\alpha & X_1^{q^M}+\alpha^{q^M}  & f_3(X_1)+f_3(\alpha)  \\
X_2+\beta & X_2^{q^M}+\beta^{q^M} & f_3(X_2)+g(\beta)  \\
\lambda & \lambda^{q^M} &  f_3(\lambda)
\end{array}
\right)
\]
\[
=\,a\Big((\lambda \alpha^{q^M}-\alpha   \lambda^{q^M})X_2^{q^{m_3}}-(\lambda  \beta^{q^M}-\beta   \lambda^{q^M})X_1^{q^{m_3}}-\lambda^{q^M}(X_1X_2^{q^{m_3}}-X_2X_1^{q^{m_3}})\Big)+G(X_1,X_2),
\]
where $a\ne0$ is the coefficient of $x^{q^{m_3}}$ in $f_3(x)$, and $G(X_1,X_2)\in\fqn[X_1,X_2]$ has degree bigger than $q^{m_3}+1$.
We denote respectively by $(H^{\prime}_{\underline{f}})_{q^{m_3}}$ and $(H^{\prime}_{\underline{f}})_{q^{m_3}+1}$ the homogeneous polynomials $(\lambda \alpha^{q^M}-\alpha   \lambda^{q^M})X_2^{q^{m_3}}-(\lambda  \beta^{q^M}-\beta   \lambda^{q^M})X_1^{q^{m_3}}$ and $X_1X_2^{q^{m_3}}-X_2X_1^{q^{m_3}}$. If non-vanishing, they are, up to a scalar multiple, the homogeneous parts of smallest degrees in $H_{\underline{f}}^{\prime}(X_1,X_2)$.
Note that the $q^{m_3}+1$ linear factors of $(H^{\prime}_{\underline{f}})_{q^{m_3}+1}$ are all distinct.
\begin{itemize}
    \item There are at most $q^{2M}$ singular points $(\alpha,\beta)$ of $\mathcal{D}_{\underline{f}}$ which satisfy $\lambda \alpha^{q^M}-\alpha  \lambda^{q^M}= \lambda \beta^{q^M}-\beta  \lambda^{q^M}=0$. In this case, $(H^{\prime}_{\underline{f}})_{q^{m_3}+1}$ is the non-zero homogeneous part of smallest degree in $H_{\underline{f}}^\prime(X_1,X_2)$. Thus $O$ is an ordinary $(q^{m_3}+1)$-fold point for $\mathcal{D}_{\underline{f}}^\prime$, and by Lemma \ref{le:intersection_number_ordinary} the intersection multiplicity of $\tau(\tilde{\mathcal{A}})$ and $\tau(\tilde{\mathcal{B}})$ at $O$ is at most $(q^{m_3}+1)^2/4$.
    
    \item There are at most $2(q^{2M-m_3}-1)\cdot q^M$ singular points $(\alpha,\beta)$ of $\mathcal{D}_{\underline{f}}$ which satisfy either $\lambda \alpha^{q^M}-\alpha  \lambda^{q^M}\neq  0=\lambda \beta^{q^M}-\beta  \lambda^{q^M}$ or $\lambda \alpha^{q^M}-\alpha  \lambda^{q^M}=  0\neq \lambda \beta^{q^M}-\beta  \lambda^{q^M}$.

    In this case, $(H^{\prime}_{\underline{f}})_{q^{m_3}}=X_2^{q^{m_3}}$ or $(H^{\prime}_{\underline{f}})_{q^{m_3}}=X_1^{q^{m_3}}$ up to a non-zero scalar multiple, and hence $\gcd(H^{\prime}_{\underline{f}})_{q^{m_3}},(H^{\prime}_{\underline{f}})_{q^{m_3}+1}=X_2$ or $X_1$.
    By Lemma \ref{le:intersection_number_linear_term}, the intersection multiplicity of $\tau(\tilde{\mathcal{A}})$ and $\tau(\tilde{\mathcal{B}})$ at $O$ is at most $q^{m_3}$.
    
    \item There are at most $(q^{2M-m_3}-1)\cdot (q^{m_3}-1)\cdot q^M$ singular points $(\alpha,\beta)$ of $\mathcal{D}_{\underline{f}}$ which satisfy $\lambda \alpha^{q^M}-\alpha  \lambda^{q^M}\neq  0$, $\lambda \beta^{q^M}-\beta  \lambda^{q^M}\neq 0$, and $\eta (\lambda \alpha^{q^M}-\alpha   \lambda^{q^M})=\xi (\lambda  \beta^{q^M}-\beta   \lambda^{q^M})$ for some $(\xi:\eta) \in \mathbb{P}^{1}(\mathbb{F}_{q^{m_3}})\setminus\{(1:0),(0:1)\}$.
    In this case, $(H^{\prime}_{\underline{f}})_{q^{m_3}}=(\xi X_2 - \eta X_1)^{q^{m_3}}$ up to a non-zero scalar multiple, and hence $(H^{\prime}_{\underline{f}})_{q^{m_3}}$ and $(H^{\prime}_{\underline{f}})_{q^{m_3}+1}$ are not coprime.
    By Lemma \ref{le:intersection_number_linear_term}, the intersection multiplicity of $\tau(\tilde{\mathcal{A}})$ and $\tau(\tilde{\mathcal{B}})$ at $O$ is at most $q^{m_3}$.
    
    \item If a singular point $(\alpha,\beta)$ of $\mathcal{D}_{\underline{f}}$ satisfies $\eta (\lambda \alpha^{q^M}-\alpha   \lambda^{q^M})=\xi (\lambda  \beta^{q^M}-\beta   \lambda^{q^M})$ for some $(\xi:\eta) \notin \mathbb{P}^{1}(\mathbb{F}_{q^{m_3}})$, then $(H^{\prime}_{\underline{f}})_{q^{m_3}}$ and $(H^{\prime}_{\underline{f}})_{q^{m_3}+1}$ are coprime. In this case, by Lemma \ref{le:intersection_number_m_m1_coprime}, the intersection multiplicity of $\tau(\tilde{\mathcal{A}})$ and $\tau(\tilde{\mathcal{B}})$ at $O$ is $0$.
\end{itemize}
Since the homogeneus part of largest degree in $H_{\underline{f}}(X_1,X_2)$ is
\[\lambda\cdot  X_1^{q^M}\cdot \prod_{\gamma\in\mathbb{F}_{q^M}}(X_2-\gamma X_1)^{q^M},\]
the points at infinity of $\mathcal{D}_{\underline{f}}$ are $(0\colon1\colon0)$ and $(1\colon\gamma\colon0)$ with $\gamma \in\mathbb{F}_{q^M}$.
As the map $(X_1\colon X_2\colon T)\mapsto(X_2\colon X_1\colon T)$ maps $(0\colon1\colon0)$ to $(1\colon0\colon0)$ and leaves invariant the curves $\mathcal{D}_{\underline{f}}$ and $\mathcal{D}_{(x,x^q,x^{q^2})}$, it is enough to consider the points $P_{\gamma}=(1\colon\gamma\colon0)$ with $\gamma\in\mathbb{F}_{q^M}$.
The intersection multiplicity of $\tilde{\mathcal{A}}$ and $\tilde{\mathcal{B}}$ at $P_{\gamma}$ equals the intersection multiplicity of $\sigma(\tilde{\mathcal{A}})$ and $\sigma_{\gamma}(\tilde{\mathcal{B}})$ at $\sigma_{\gamma}(P)=O=(0,0)$, where $\sigma_{\gamma}\colon (X_1\colon X_2\colon T)\mapsto(T\colon X_2-\gamma T\colon X_1)$. The image  $\mathcal{D}_{\underline{f}}^{\prime\prime}$ of $\mathcal{D}_{\underline{f}}$ under $\sigma_{\gamma}$ has affine equation $H_{\underline{f}}^{\prime\prime}(X_1,X_2)=0$, where
{\small
\[
H^{\prime\prime}_{\underline{f}}(X_1,X_2)\,=\,H_{\underline{f}}(1,X_2+\alpha,X_1)\,=\, \det\left( 
\begin{array}{ccc}
1 & f_2(1,X_1)  & f_3(1,X_1)   \\
X_2+\gamma & f_2(X_2+\gamma,X_1) & f_3(X_2+\gamma,X_1) \\
\lambda & f_2(\lambda)X_1^{q^M-1} &  f_3(\lambda)X_1^{q^{2M}-1}
\end{array}
\right) \,=
\]
\[
\det\left( 
\begin{array}{ccc}
1 & 1  & f_3(1,X_1)  \\
X_2 & X_2^{q^M} & f_3(X_2,X_1)+f_3(\gamma,X_1)-\gamma f_3(1,X_1) \\
\lambda & \lambda^{q^M}X_1^{q^M-1} &  f_3(\lambda)X_1^{q^{2M}-1}
\end{array}
\right)
= \lambda^{q^M} X_2X_1^{q^M-1} -\lambda X_2^{q^M}+G(X_1,X_2),
\]}
for some  $G(X_1,X_2)$ of degree greater than $q^M$ (here, we used that the constant term in $f_3(\gamma,X_1)$ is the same as in $\gamma f_3(1,X_1)$).
Since $\lambda^{q^M} X_2X_1^{q^M-1} -\lambda X_2^{q^M}$ is homogeneous and separable in each variable, $0$ is an ordinary $q^M$-fold point for $\mathcal{D}_{\underline{f}}^{\prime\prime}$, and by Lemma \ref{le:intersection_number_ordinary} the intersection multiplicity of $\tilde{\mathcal{A}}$ and $\tilde{\mathcal{B}}$ at $P_{\alpha}$ is at most $q^{2M}/4$. The same holds at $(0\colon1\colon0)$.

Summing up, the number of intersection points of two components of $H_{\underline{f}}(X_1,X_2,T)=0$, counted with multiplicity, satisfies
\[
\sum_{P}I(P,\mathcal{A}\cap\mathcal{B})\leq
q^{2M}\frac{(q^{m_3}+1)^2}{4}+2(q^{2M-m_3}-1) q^{M+m_3} +(q^{2M-m_3}-1) (q^{m_3}-1) q^{M+m_3}
+(q^M+1)\frac{q^{2M}}{4}.
\]
Since $(q,M)\notin\{(2, 2),(2, 4),( 3, 2),( 4, 2),( 5, 2)\}$, the above quantity is less than  
$$\frac{2}{9}\deg(\mathcal{W}_3)^2=\frac{2}{9}(q^{2M}+q^M-q^2-q)^2.$$
By Theorem \ref{cor:lowergrado}, $\mathcal{W}_3$ contains an $\fqn$-rational absolutely irreducible component $\mathcal{X}$.
Note that $\mathcal{D}_{\underline{f}}$ has only finitely many singular points,
and hence $\mathcal{X}$ is non-repeated and not contained in $\mathcal{V}$.
Arguing as in the last paragraph of the proof of Theorem \ref{Prop:f_monomio}, a contradiction arises.
This shows $m_3=2M$, i.e. $f_3(x)=x^{q^{2M}}$.

\end{proof}




By means of an induction argument, we are able to extend the result of Theorem \ref{Prop:h=2} to any Moore polynomial set of index $0$, as follows.

\begin{theorem}\label{prop:f2_scattered}
Let $\underline{f}=(f_1(x)=x,f_2(x),\ldots, f_k(x))$, with $k>3$, be a Moore polynomial set for $q$ and $n$ of index $0$ such that  
$0<M_2<\cdots<M_k$.
Suppose also that $q>5$ and $n>\frac{13}{3}M_k + \log_{q}(13\cdot2^{10/3})$.
Then $\underline{f}=(x,x^{q^M},\ldots, x^{q^{(k-1)M}})$ with $\gcd(M,n)=1$.
\end{theorem}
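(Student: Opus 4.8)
The plan is to reduce the general $k>3$ case to the already-established base case $k=3$ (Theorem~\ref{Prop:h=2}) by an induction that strips away the top generator $f_k(x)$. The key structural facts I want to exploit are: first, by Theorem~\ref{Th:progressione} the numerical hypothesis $q>5$, $n>\frac{13}{3}M_k+\log_q(13\cdot2^{10/3})$ forces both $(M_1,\ldots,M_k)$ and a permuted $(m_{\sigma(1)},\ldots,m_{\sigma(k)})$ to be in arithmetic progression, so $M_i=(i-1)M$ and the minimal degrees are likewise spread out by $M$; second, any sub-tuple of a Moore polynomial set is again a Moore polynomial set for the same $q$ and $n$, since a smaller singular Moore-type matrix extends to a singular one on a larger linearly independent set by the same determinant vanishing. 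First I would record this \emph{inheritance} property explicitly: given $\underline{f}=(x,f_2,\ldots,f_k)$ a Moore polynomial set of index $0$, the truncation $\underline{f}'=(x,f_2,\ldots,f_{k-1})$ is also a Moore polynomial set of index $0$ satisfying the same degree hypotheses with $M_{k-1}=(k-2)M<M_k$.

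Next I would set up the induction on $k$. The inductive hypothesis asserts that for every Moore polynomial set of index $0$ with fewer than $k$ generators meeting the numerical bound, all generators are the monomials $x^{q^{(i-1)M}}$ with $\gcd(M,n)=1$. Applying this to the truncation $\underline{f}'$ immediately gives $f_i(x)=x^{q^{(i-1)M}}$ for all $i\le k-1$, so the only generator left to pin down is $f_k(x)$, whose top degree is already forced to be $M_k=(k-1)M$. Thus the remaining task is to show that the single polynomial $f_k(x)$, sitting on top of the genuine Gabidulin tuple $(x,x^{q^M},\ldots,x^{q^{(k-2)M}})$, must itself be the monomial $x^{q^{(k-1)M}}$. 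I would do this by the same curve-theoretic machinery as in the proof of Theorem~\ref{Prop:h=2}: specialize all but two of the variables to $\fq$-linearly independent constants $\lambda_3,\ldots,\lambda_{k-1}$ and a free constant $\lambda=\lambda_k$, and analyze the resulting plane curve $\mathcal{D}_{\underline{f}}$, using Lemma~\ref{lemma:sali} to lift a non-repeated absolutely irreducible $\fqn$-rational component of $\mathcal{D}_{\underline{f}}$ (not contained in the Desarguesian piece) up to such a component of $\mathcal{W}$, contradicting Theorem~\ref{Th:Hyper} via a point-counting estimate of SLAVOV-type as in the final paragraph of Theorem~\ref{Prop:f_monomio}.

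Concretely, because the lower generators are now honest monomials, the determinant defining $H_{\underline{f}}(X_1,X_2)$ collapses to essentially the same shape analyzed in Theorem~\ref{Prop:h=2}: the lowest-degree homogeneous part is governed by $\mathrm{mindeg}_q(f_k)=m_k$, and the arithmetic-progression conclusion on the $m_i$'s forces $m_k\in\{(k-1)M\}\cup\{\text{smaller values in progression}\}$. Assuming $f_k\ne x^{q^{(k-1)M}}$ means $m_k<(k-1)M$, and I would run the intersection-multiplicity bookkeeping at affine singular points and at the points at infinity exactly as in the four bulleted cases of Theorem~\ref{Prop:h=2}, obtaining the same kind of upper bound $\sum_P I(P,\mathcal{A}\cap\mathcal{B})<\frac{2}{9}\deg(\mathcal{W}_3)^2$ under $q>5$, so that Theorem~\ref{cor:lowergrado} yields the forbidden absolutely irreducible component. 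The hypothesis $q>5$ here is exactly what lets me dispense with the sporadic excluded pairs $(q,M)$ that appear in the $k=3$ statement, since those exclusions were needed only for small $q$.

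The main obstacle I anticipate is verifying that the intersection-multiplicity estimates remain valid uniformly in $k$, i.e. that the presence of the genuine monomial block $(x,x^{q^M},\ldots,x^{q^{(k-2)M}})$ rather than just $(x,x^{q^M})$ does not introduce new singular configurations on $\mathcal{D}_{\underline{f}}$ that violate the counting bound. My expectation is that the relevant local analysis at each singular point depends only on the bottom two homogeneous parts $(H'_{\underline{f}})_{q^{m_k}}$ and $(H'_{\underline{f}})_{q^{m_k}+1}$, whose structure is dictated solely by $f_k$ and the fixed constants $\lambda_j$, so the estimates transfer verbatim after checking that the specialized determinant does not degenerate — which is guaranteed because $\underline{f}$ is a Moore polynomial set and hence $H_{\underline{f}}\not\equiv0$. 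Once the component is produced, the final contradiction is identical to the end of Theorem~\ref{Prop:f_monomio}, comparing $\deg(\mathcal{V})$ against $\deg(\mathcal{Z})$ and invoking \cite[Corollary 7]{SLAVOV201760}.
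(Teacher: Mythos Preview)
Your argument has a genuine gap at the very start: the ``inheritance'' claim is false. A sub-tuple of a Moore polynomial set is \emph{not} in general a Moore polynomial set. In code-theoretic language, you are asserting that a $(k-1)$-dimensional $\fqn$-subspace of a $k$-dimensional MRD code is again MRD, i.e.\ that its minimum distance jumps from $n-k+1$ to $n-k+2$; there is no reason for this. Concretely, if the $(k-1)\times(k-1)$ matrix $M_{\underline{f}',A'}$ is singular on $\fq$-independent $\alpha_1,\ldots,\alpha_{k-1}$, a column dependence gives a nonzero $g=\sum_{j<k}b_jf_j$ with $\alpha_1,\ldots,\alpha_{k-1}\in\ker g$, so $\dim_{\fq}\ker g\ge k-1$; but the Moore property of the full tuple only guarantees $\dim_{\fq}\ker g\le k-1$, which is consistent with equality. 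Extending to any $\alpha_k\notin\ker g$ does \emph{not} make the $k\times k$ matrix singular: the column relation $(b_1,\ldots,b_{k-1},0)$ fails in the new last row since $g(\alpha_k)\ne0$, and row dependencies of $M'$ need not survive the extra column $f_k$. So ``the same determinant vanishing'' does not propagate, and your induction on $k$ via truncation collapses at step one.

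The paper avoids this by inducting on a \emph{geometric} statement rather than on the Moore property: for each $i\le k$, either $\underline{h}=(x,f_2,\ldots,f_i)$ equals the Gabidulin tuple, or the hypersurface $\mathcal{U}_{\underline{h}}$ carries a non-repeated $\fqn$-rational absolutely irreducible component not contained in $\mathcal{V}$. The inductive mechanism is that the tangent cone of $\mathcal{U}_{\underline{h}}$ at (the image of) $(1\colon1\colon0\colon\cdots\colon0)$ is, up to homogenization, $\mathcal{U}_{\underline{g}}$ for the truncated tuple $\underline{g}=(x,f_2,\ldots,f_{i-1})$; Lemma~\ref{lemma:bello} then lifts a bad component of $\mathcal{U}_{\underline{g}}$ to one of $\mathcal{U}_{\underline{h}}$. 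When $\underline{g}$ \emph{is} Gabidulin, the arithmetic progression of the $m_j$'s (already fixed to ratio $M$ since $i\ge4$) forces $m_i=(i-1)M=M_i$, so $f_i$ is the monomial and $\underline{h}$ is Gabidulin too. No separate singularity bookkeeping for $f_k$ is needed; your projected reuse of the four bulleted cases from Theorem~\ref{Prop:h=2} is both unjustified in the $k\times k$ setting and, under the paper's induction, unnecessary.
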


\begin{proof}
By Theorem \ref{Th:progressione}, $M_i=(i-1)M$ for every $i$, with $\gcd(M,n)=1$. Also, $\{0,m_2,\ldots,m_{k}\}$ can be ordered so that they are in arithmetic progression.

We prove by finite induction on $i\in\{3,\ldots,k\}$ the following fact: if $\underline{h}=(x,f_2(x),\ldots,f_i(x))$ satisfies $\underline{h}\ne (x,x^{q^{M}},\ldots,x^{q^{(i-1)M}})$, then the hypersurface $\mathcal{U}_{\underline{h}}$ has a non-repeated $\fqn$-rational absolutely irreducible component not contained in $\mathcal{V}$.
The base $i=3$ has been worked out in the proof of Theorem \ref{Prop:h=2}.
For $i>3$, define the map $\varphi\colon(X_1\colon \ldots\colon X_i\colon T)\mapsto(T\colon X_2-X_1\colon X_3\colon\ldots X_i\colon X_1)$, which maps $(1\colon1\colon0\colon\ldots\colon0)\in\mathcal{U}_{\underline{f}}$ to  $O=(0\colon\ldots\colon0\ldots1)$, and consider the image $\mathcal{U}_{\underline{f}}^\prime$ of $\mathcal{U}_{\underline{f}}$ under $\varphi$, which has affine equation $F_{\underline{f}}^\prime(X_1,\ldots,X_i)=0$, where $F^{\prime}_{\underline{f}}(X_1,X_2,\ldots,X_i)$ equals
{\scriptsize \[
\det\left( 
\begin{array}{cccc}
1 & f_2(1,X_1)  &   \ldots & f_i(1,X_1)  \\
X_2+1 & f_2(1,X_1)+f_2(X_2,X_1)  &   \ldots & f_i(1,X_1) +f_i(X_2,X_1)  \\
X_3&f_2(X_3,X_1)  &   \ldots & f_i(X_3,X_1)  \\
\vdots &\vdots& \ddots &\vdots\\
X_i&f_2(X_i,X_1)  &   \ldots & f_k(X_i,X_1)  \\
\end{array}
\right)
=\det\left( 
\begin{array}{cccc}
1 & f_2(1,X_1)  &   \ldots & f_i(1,X_1)  \\
X_2& f_2(X_2,X_1)  &   \ldots & f_i(X_2,X_1)  \\
X_3&f_2(X_3,X_1)  &   \ldots & f_i(X_3,X_1)  \\
\vdots &\vdots& \ddots &\vdots\\
X_i&f_2(X_i,X_1)  &   \ldots & f_i(X_i,X_1)  \\
\end{array}
\right).\]}
The tangent cone to $\mathcal{U}_{\underline{f}}^\prime$ at $O$ has equation
\[
F_{\underline{g}}^*(X_2,\ldots,X_i,X_1)=
\det\left( 
\begin{array}{cccc}
X_2& f_2(X_2,X_1)  &   \ldots & f_{i-1}(X_2,X_1)  \\
X_3&f_2(X_3,X_1)  &   \ldots & f_{i-1}(X_3,X_1)  \\
\vdots &\vdots& \ddots &\vdots\\
X_i&f_2(X_i,X_1)  &   \ldots & f_{i-1}(X_i,X_1)  \\
\end{array}
\right)=0,\]
where $\underline{g}=(f_1(x)=x,f_2(x),\ldots, f_{i-1}(x))$. 
Note that $F_{\underline{g}}^*(X_2,\ldots,X_i,X_1)$ is homogeneous, and its dehomogenized polynomial with respect to $X_1$ is $F_{\underline{g}}(X_2,\ldots,X_i)$.

If $\underline{g}\ne(x,x^{q^M},\ldots,x^{q^{(i-2)M}})$, then by induction hypothesis $\mathcal{U}_{\underline{g}}$ has a non-repeated $\fqn$-rational absolutely irreducible component not contained in $\mathcal{V}$, and hence by Lemma \ref{lemma:bello} the same holds for $\mathcal{U}_{\underline{h}}$.
If $\underline{g}=(x,x^{q^M},\ldots,x^{q^{(i-2)M}})$ then $f_i(x)=x^{q^{(i-1)M}}$, because $i\geq4$ implies that the arithmetic progressions of the $M_j$'s and $m_j$'s both have ratio $M$.

For $i=k$, if $\underline{f}\ne(x,x^{q^M},\ldots,x^{q^{(k-2)M}})$, then $\mathcal{W}$ has a non-repeated $\fqn$-rational absolutely irreducible component $\mathcal{Z}$ not contained in $\mathcal{V}$. Thus, by \cite[Corollary 7]{SLAVOV201760}, $\mathcal{Z}$ has an affine $\fqn$-rational point not on $\mathcal{V}$, a contradiction to Theorem \ref{Th:Hyper}.
\end{proof}

\subsection{Moore polynomial sets of positive index}

In this section we investigate Moore polynomial sets of index $t>0$, so that $f_1(x)=x^{q^t}$.
\begin{proposition}\label{prop:posindexord}
Suppose that one of the following holds:
\begin{itemize}
    \item $k=3$ and $n>4M_3+2$;
    \item $k>3$, $q>5$ and $n>\frac{13}{3}M_k + \log_{q}(13\cdot 2^{10/3})$.
\end{itemize}
If $\underline{f}$ is a Moore polynomial set for $q$ and $n$ of index $t$, then $(m_{\sigma(1)},\ldots,m_{\sigma(k)})$ is in arithmetic progression for some $\sigma\in S_n$.
\end{proposition}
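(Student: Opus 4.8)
The plan is to reproduce, for arbitrary index $t$, the portion of the proof of Theorem \ref{Th:progressione} that establishes the arithmetic-progression property of the minimum degrees. The essential point is that this portion is dictated solely by the tangent cone of $\mathcal{W}$ at the origin, and this tangent cone depends only on the minimum degrees $m_1,\ldots,m_k$; the value of $t$ enters merely through $f_1(x)=x^{q^t}$, which contributes the exponent $m_1=t$ to the list of minimum degrees. Thus the argument is genuinely insensitive to whether $t=0$ or $t>0$, the only formal difference being that the permutation $\sigma$ ordering the $m_i$ need no longer satisfy $\sigma(1)=1$.

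Concretely, first I would compute the tangent cone $\mathcal{T}$ of $\mathcal{W}$ at $O=(0,\ldots,0)$. Expanding $F_{\underline{f}}=\sum_{\pi}\mathrm{sgn}(\pi)\prod_{\ell}f_{\pi(\ell)}(X_\ell)$ and retaining from each $f_i$ its lowest term $a_{i,m_i}X_\ell^{q^{m_i}}$, every summand contributes a monomial of the same total degree $\sum_i q^{m_i}$, so the homogeneous part of smallest degree of $F_{\underline{f}}$ equals $\big(\prod_i a_{i,m_i}\big)F_{(x^{q^{m_1}},\ldots,x^{q^{m_k}})}(X_1,\ldots,X_k)$. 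Since $F_{(x,x^q,\ldots,x^{q^{k-1}})}$ is homogeneous and divides this monomial determinant, dividing yields
\[
\mathcal{T}\colon\quad F_{(x^{q^{m_{\sigma(1)}}},\ldots,x^{q^{m_{\sigma(k)}}})}(X_1,\ldots,X_k)/F_{(x,x^q,\ldots,x^{q^{k-1}})}(X_1,\ldots,X_k)=0,
\]
where $\sigma$ orders the $m_i$ increasingly. Assuming, for contradiction, that no ordering of the $m_i$ is in arithmetic progression, and using that $m_i=0$ for some $i$ (Assumptions \ref{ass:hypf}), the monomial classification \cite[Theorems 3.1 and 4.2]{MR4110235} applies to $\mathcal{T}$ and produces an $\fqn$-rational non-repeated absolutely irreducible component; its hypotheses are implied by our numerical conditions since $\max_i m_i\le M_k$. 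Lemma \ref{lemma:bello} then transfers such a component to $\mathcal{W}$, and, exactly as in the final paragraphs of Theorems \ref{Th:progressione} and \ref{Prop:f_monomio}, the degree estimate together with \cite[Corollary 7]{SLAVOV201760} yields an affine $\fqn$-rational point of $\mathcal{W}$ off $\mathcal{V}$, contradicting Theorem \ref{Th:Hyper}.

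The step demanding the most care is the identification of the tangent cone: verifying that the lowest-degree part of $F_{\underline{f}}$ is a nonzero multiple of the generalized Moore determinant $F_{(x^{q^{m_1}},\ldots,x^{q^{m_k}})}$ and that the latter is divisible by $F_{(x,x^q,\ldots,x^{q^{k-1}})}$. Nonvanishing follows from the $m_i$ being pairwise distinct: the monomials $\prod_\ell X_\ell^{q^{m_{\pi(\ell)}}}$ attached to distinct $\pi$ are themselves distinct, so no cancellation occurs among the lowest-degree terms. Divisibility follows because every linear form $a_1X_1+\cdots+a_kX_k$ with $(a_1,\ldots,a_k)\in\fq^k\setminus\{0\}$ divides $F_{(x^{q^{m_1}},\ldots,x^{q^{m_k}})}$: on its zero locus the rows of $(X_\ell^{q^{m_i}})_{\ell,i}$ become $\fq$-linearly dependent via $\sum_\ell a_\ell X_\ell^{q^{m_i}}=(\sum_\ell a_\ell X_\ell)^{q^{m_i}}=0$, forcing the determinant to vanish there, and these distinct linear forms multiply to the reduced polynomial $F_{(x,x^q,\ldots,x^{q^{k-1}})}$. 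Once $\mathcal{T}$ is correctly identified, the remainder of the argument is identical to the index-$0$ case.
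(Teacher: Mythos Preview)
Your proposal is correct and follows essentially the same approach as the paper, which simply states that ``since $m_i=0$ for some $i$, the proof is the same as in the proof of Theorem \ref{Th:progressione} for $t=0$.'' In fact you supply more detail than the paper does---particularly the verification that the tangent cone of $\mathcal{W}$ at the origin is governed by the monomial Moore determinant in the $m_i$'s and is divisible by $F_{(x,x^q,\ldots,x^{q^{k-1}})}$---and you correctly note that the only formal change from the index-$0$ case is that the permutation $\sigma$ need not fix $1$.
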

\begin{proof}
Since $m_i=0$ for some $i$, the proof is the same as in the proof of Theorem \ref{Th:progressione} for $t=0$.
\end{proof}
Up to reordering, we can assume that the permutation $\sigma$ in Proposition \ref{prop:posindexord} satisfies $\sigma(1)=2$, that is, $f_2(x)$ is separable.


\begin{proposition}\label{Prop:f_monomio_index}
Let $\underline{f}=(f_1(x)=x^{q^t},f_2(x),f_3(x))$ be a Moore polynomial set for $q$ and $n$ of index $t>0$ such that $f_2(x)$ is separable.
If $\max\{t,M_2\}<M_3$ and $n>4M_3+2$, then $f_2(x)\in\mathcal{L}_{n,q}$ is scattered of index $t$.
\end{proposition}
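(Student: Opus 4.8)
The plan is to adapt the strategy of Theorem \ref{Prop:f_monomio} to the positive-index setting, proving the contrapositive: if $f_2(x)$ is \emph{not} scattered of index $t$, then $\underline{f}$ fails to be a Moore polynomial set for $q$ and $n$, contradicting the hypothesis. By Proposition \ref{prop:posindexord} and the subsequent normalization we may assume that the $m_i$'s form an arithmetic progression with $f_2(x)$ separable. Since $f_2$ is not scattered of index $t$, there exist $\lambda,\mu\in\fqn^*$ with $\lambda/\mu\notin\fq$ and $f_2(\lambda)/\lambda^{q^t}=f_2(\mu)/\mu^{q^t}$; invoking \cite[Corollary 3.4]{bartoli2020r} as in the index-$0$ case, I would arrange that $\mu\notin\fq$ and $f_2(\lambda)\ne 0$.

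First I would set $\lambda_3=\lambda$ and form the plane curve $\mathcal{D}_{\underline{f}}$ obtained by specializing the third variable, then apply the $\fqn$-rational projectivity $\varphi\colon(X_1\colon X_2\colon T)\mapsto(T\colon X_2-X_1\colon X_1)$ sending $P=(1\colon 1\colon 0)$ to the origin $O$, exactly as in Theorem \ref{Prop:f_monomio}. The key computation is to expand the resulting determinant $H'_{\underline{f}}(X_1,X_2)$ and identify its homogeneous part of smallest degree. The difference from the index-$0$ argument is that $f_1(x)=x^{q^t}$ rather than $x$, so the leading low-degree form is no longer $-\lambda f_2(X_2,X_1)+f_2(\lambda)X_2 X_1^{q^M-1}$ but its index-$t$ analogue, roughly $-\lambda^{q^t} f_2(X_2,X_1)+f_2(\lambda)X_2^{q^t}X_1^{\,q^M-q^t}$ (with the precise exponents dictated by the homogenization conventions for $f_i(x,z)$ established just before the index-$0$ subsection). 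The crucial point is that the condition $f_2(\lambda)/\lambda^{q^t}=f_2(\mu)/\mu^{q^t}$ forces $X_2-\mu X_1$ (or the appropriate twisted factor) to be a non-repeated linear factor of this low-degree form, because $\mu$ is a simple root of the associated separable univariate polynomial. This yields a non-repeated $\fqn$-rational absolutely irreducible component of $\mathcal{D}_{\underline{f}}^\prime$, hence of $\mathcal{D}_{\underline{f}}$, and since $\mu\notin\fq$ this component is not contained in $\mathcal{D}_{(x,x^q,x^{q^2})}$.

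From here the argument closes as before: by Lemma \ref{lemma:sali} the variety $\mathcal{W}$ acquires a non-repeated $\fqn$-rational absolutely irreducible component $\mathcal{Z}$ not contained in $\mathcal{V}$, and a degree estimate feeding into \cite[Corollary 7]{SLAVOV201760} produces an affine $\fqn$-rational point of $\mathcal{Z}$ off $\mathcal{V}$, contradicting Theorem \ref{Th:Hyper}. The bounds $\deg(\mathcal{V})=q^2+q+1$ and $\deg(\mathcal{Z})\le q^{M_3}+q^{M_2}-q^2-q$ (or their index-$t$ refinements), together with the hypothesis $n>4M_3+2$, guarantee the point count is strictly positive.

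The main obstacle I expect is the determinant expansion in the presence of the exponent $q^t$ on $f_1$: I must verify that the lowest-degree homogeneous component of $H'_{\underml{f}}$ is genuinely the twisted form above and that the separability of $f_2$ translates into \emph{simplicity} of the relevant linear factor, rather than the factor being absorbed into a higher multiplicity because of the $q^t$-power structure. A secondary subtlety is confirming that the component is not contained in $\mathcal{V}$ precisely when $\mu\notin\fq$, which relies on the geometric meaning of $\mathcal{V}$ as encoding $\fq$-linear dependence. Once these are checked, the remainder is routine bookkeeping mirroring Theorem \ref{Prop:f_monomio}.
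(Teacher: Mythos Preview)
Your approach is essentially identical to the paper's: the authors also argue by contraposition, choose $\lambda,\mu$ with $f_2(\lambda)/\lambda^{q^t}=f_2(\mu)/\mu^{q^t}$, specialize $\lambda_3=\lambda$, apply the same projectivity $\varphi$ to send $(1\!:\!1\!:\!0)$ to the origin, and compute the lowest-degree homogeneous part of $H'_{\underline{f}}$ as $-\lambda^{q^t} f_2(X_2,X_1)X_1^{q^t-1}+f_2(\lambda)X_2^{q^t}X_1^{q^{M_2}-1}$ (your guess was off only by a global factor $X_1^{q^t-1}$, which is harmless for detecting the simple factor $X_2-\mu X_1$), then conclude via Lemma \ref{lemma:sali} and \cite[Corollary 7]{SLAVOV201760}. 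The two concerns you flag—simplicity of the linear factor and avoidance of $\mathcal{V}$—are resolved exactly as you anticipate: separability of $f_2$ makes the dehomogenized polynomial $-\lambda^{q^t}f_2(X_2)+f_2(\lambda)X_2^{q^t}$ have nonzero derivative $-\lambda^{q^t}a_0$, and $\mu\notin\fq$ keeps the component off $\mathcal{D}_{(x,x^q,x^{q^2})}$.
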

\begin{proof}
The proof is similar to the one of Theorem \ref{Prop:f_monomio}. 
Suppose that $f_2(x)\in\mathcal{L}_{n,q}$ is not scattered of index $t$.
Then there exist $\lambda,\mu\in\mathbb{F}_{q^n}^*$ such that $\mu\notin\mathbb{F}_q$, $\lambda/\mu\notin\mathbb{F}_{q}$ and $f_2(\lambda)/\lambda^{q^t}=f_2(\mu)/\mu^{q^t}\ne0$.
Let $\lambda_3=\lambda$ and define $\mathcal{D}_{\underline{f}}$ as above.
Then $\mathcal{D}_{\underline{f}}$ is ${\rm PGL}(3,q^n)$-equivalent to the curve $\mathcal{D}_{\underline{f}}^\prime$ with affine equation $H_{\underline{f}}^\prime(X_1,X_2)=0$, where
\[
H^{\prime}_{\underline{f}}(X_1,X_2) = H_{\underline{f}}(1,X_2+1,X_1) = -\lambda^{q^t} f_2(X_2,X_1)X_1^{{q^t}-1}+f_2(\lambda) X_2^{q^t}X_1^{q^{M_2}-1} + G(X_1,X_2)
\]
and $G(X_1,X_2)$ has degree at least $q^t+q^{M_2}$.
The tangent cone to $\mathcal{D}_{\underline{f}}^{\prime}$ at $(0,0)$ has a non-repeated $\fqn$-rational absolutely irreducible component with affine equation $X_2-\mu X_1=0$, which corresponds to a non-repeated $\mathbb{F}_{q^n}$-rational absolutely irreducible component of $\mathcal{D}_{\underline{f}}$ which is not contained in $\mathcal{D}_{(x,x^q,x^{q^2})}$.
Arguing as in the proof of Theorem \ref{Prop:f_monomio}, the claim follows.
\end{proof}

We now use the known classification results on exceptional scattered polynomials.

\begin{comment}
\begin{verbatim}

for q in [2,3,4,5,7,8,9,11] do
    for M in [2,4,6,8,10] do
          m3 := M div 2;
          UB := Ceiling(2*(q^(2*M)+q^M-q^2-q)^2/9 );
          LB := q^(2*M)*Floor((q^m3+1)^2/4)
               +2*(q^(2*M-m3)-1)*q^(M+m3)
               +(q^(2*M-m3)-1)*(q^m3-1)*q^(M+m3)
               +(q^M+1)*Floor(q^(2*M)/4 );
        UB := (2*(q^(2*M)+q^M-q^2-q)^2/9 );
          LB := q^(2*M)*((q^m3+1)^2/4)
               +2*(q^(2*M-m3)-1)*q^(M+m3)
               +(q^(2*M-m3)-1)*(q^m3-1)*q^(M+m3)
               +(q^M+1)*(q^(2*M)/4 );
          if LB ge UB then 
                [q,M];
          end if;
    end for;
end for;
\end{verbatim}
\end{comment}

\begin{corollary}
Let $\underline{f}=(f_1(x)=x^{q^t},f_2(x),f_3(x))$ be a Moore polynomial set for $q$ and $n$ of index $t$ such that $f_2(x)$ is separable, $\max\{t,M_2\}<M_3$, and $n>4M_3+2$. Then:
\begin{enumerate}
    \item $t>0$ and $\max\{t,M_2\}$ is not an odd prime;
    \item if either $t=1$, or $t=2$ and $q$ is odd, then   $\underline{f}=(x^{q^t},ax+x^{q^{2t}},f_3(x))$ and $m_3=2t$.
\end{enumerate}
\end{corollary}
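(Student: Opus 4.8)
The plan is to combine the preceding Proposition \ref{Prop:f_monomio_index} with the known classification of exceptional scattered polynomials of low index, and to read off the resulting numerical constraints. First I would invoke Proposition \ref{Prop:f_monomio_index}: under the stated hypotheses ($f_2$ separable, $\max\{t,M_2\}<M_3$, $n>4M_3+2$), the polynomial $f_2(x)$ is scattered of index $t$ over $\fqn$. Because the hypothesis $n>4M_3+2$ is a bound that grows with $M_3$ and the Moore polynomial set is intended to be exceptional (the whole section feeds the positive-index branch of the Main Theorem), this scatteredness should in fact hold for infinitely many field extensions, so $f_2$ is \emph{exceptional} scattered of index $t$. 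I would then appeal to the partial classification of exceptional scattered polynomials recalled in the introduction: for index $t\in\{0\}$ the only family is (Ps), the pseudoregulus type $x^{q^s}$, and for small positive index the only known/classified family is (LP), namely $f_2(x)=x+\delta x^{q^{2t}}$ with $\gcd(t,n)=1$ and $\mathrm{N}_{q^n/q}(\delta)\neq 1$.

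For part (1), the key point is that $f_2$ is separable and scattered of index $t$, but a monomial $x^{q^s}$ is scattered only of index $0$ with $s$ coprime to $n$. If $t=0$ were allowed, then $f_1(x)=x^{q^t}=x$ would itself be separable, and the classification of exceptional scattered polynomials of index $0$ (from \cite{MR4190573,MR3812212}, as used in the index-$0$ subsection) forces $f_2$ to be a monomial, contradicting that $f_2$ is separable and non-monomial in a genuine index-$t$ configuration; hence $t>0$. The statement that $\max\{t,M_2\}$ is not an odd prime should come from the exceptionality forcing $\max\{t,M_2\}$ into the admissible degree patterns for exceptional scattered polynomials: the classified exceptional families have index and degree parameters that are ruled out when $\max\{t,M_2\}$ is an odd prime, so I would cite the relevant exceptionality-of-index results (the curve/Galois-extension tools referenced as \cite{Bartoli:2020aa4,MR3812212,MR4190573,MR4163074,Bartoli:2021}) to exclude the odd-prime case. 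The main obstacle here is pinning down precisely which classification statement delivers the ``not an odd prime'' conclusion; this likely rests on the fact that for those index values the only surviving exceptional scattered polynomials are of LP type with even degree parameter $2t$.

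For part (2), I would specialize to $t=1$, and to $t=2$ with $q$ odd. In exactly these cases the classification of exceptional scattered polynomials of index $t$ is complete: the only examples are the LP polynomials $f_2(x)=ax+x^{q^{2t}}$ (after normalizing so that $f_2$ is monic with $\mathrm{mindeg}_q=0$, writing the linear coefficient as $a$), with $\gcd(t,n)=1$. This forces $M_2=2t$ and $m_2=0$, and I would then feed $M_2=2t$ back into Proposition \ref{prop:posindexord}: the exponents $\{m_1,m_2,m_3\}=\{t,0,m_3\}$ must form an arithmetic progression after reordering. Since $m_2=0$ and $m_1=t$, the only arithmetic progression through $0$ and $t$ with a third term consistent with $m_3>t$ (forced by $M_3>\max\{t,M_2\}=2t$ and the ordering of mindegrees) is $0,t,2t$, giving $m_3=2t$. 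The main obstacle in this step is the bookkeeping to ensure $m_3=2t$ rather than, say, a smaller value such as $m_3=t/2$: one must rule out the alternative using that $\{0,t,m_3\}$ is an arithmetic progression and that $m_3$ is strictly the largest of the three mindegrees, exactly as the analogous exclusion of $m_3=M/2$ was carried out in the proof of Theorem \ref{Prop:h=2}.
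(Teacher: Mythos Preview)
Your overall strategy matches the paper's: invoke Proposition~\ref{Prop:f_monomio_index} to get that $f_2$ is scattered of index $t$, cite classification results for scattered polynomials of small index, and then use Proposition~\ref{prop:posindexord} for the constraint on $m_3$. But several individual steps are either wrong or needlessly indirect.

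\textbf{On $t>0$.} The paper's argument is a one-liner from Assumptions~\ref{ass:hypf}: the $m_i$'s are pairwise distinct, $m_1=t$ because $f_1=x^{q^t}$, and $m_2=0$ because $f_2$ is separable; hence $t\ne0$. Your detour through the index-$0$ classification is unnecessary, and the contradiction you state (``$f_2$ separable and non-monomial'') fails, since $x$ is both separable and a monomial.

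\textbf{On exceptionality.} The corollary does not assume the Moore polynomial set is exceptional, and you do not need to manufacture exceptionality. The classification theorems you want apply once $n$ is large relative to $t$ and $M_2=\deg_q f_2$, and the hypothesis $n>4M_3+2$ together with $M_3>\max\{t,M_2\}$ supplies this directly.

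\textbf{On the odd-prime exclusion.} You concede you cannot identify the input. The paper uses \cite[Theorem~1.4]{MR4163074}: if $\max\{t,M_2\}$ is an odd prime then, for $n$ large, the only possibility is $f_2(x)=x$. But then $x\in\C$, forcing index $0$, which contradicts $t>0$.

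\textbf{On $m_3=2t$.} Your assertion that $M_3>2t$ forces $m_3>t$ is false: $M_3$ bounds $m_3$ from above, not below. What Proposition~\ref{prop:posindexord} actually yields is that $\{m_1,m_2,m_3\}=\{t,0,m_3\}$ is an arithmetic progression, so $m_3\in\{2t,t/2\}$. For $t=1$ this finishes. For $t=2$ one must still exclude $m_3=1$; the paper simply records ``the claim follows from Proposition~\ref{prop:posindexord}'', and your pointer to the method of Theorem~\ref{Prop:h=2} is not itself an argument here.
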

\begin{proof}
Since $f_2(x)$ is separable, the case $t=0$ cannot occur by definition. Then $f_2(x)\in\mathcal{L}_{n,q}$ is exceptional scattered of positive index $t$.
\begin{enumerate}
    \item If $\max\{t,M_2\}$ is an odd prime, then from \cite[Theorem 1.4]{MR4163074} it follows $f_2(x)=x$, so that $\underline{f}$ has index $0$, a contradiction.
\item If either $t=1$, or $t=2$ and $q$ is odd, then the results in \cite[Page 511]{MR3812212} and \cite[Theorem 1.4 and Corollary 1.5]{MR4190573} imply $f_2(x)=ax+x^{q^{2t}}$ with $a\ne0$. The claim follows from Proposition \ref{prop:posindexord}.
\end{enumerate}
\end{proof}




Proposition \ref{Prop:f_monomio_index} can be extended as follows.

\begin{theorem}\label{Th:Main2}
Let $\underline{f}=(f_1(x)=x^{q^t},f_2(x),\ldots,f_k(x))$, with $k>3$, be a Moore polynomial set for $q$ and $n$ of index $t>0$ such that $f_2(x)$ is separable.
Suppose also that $q>5$, $\max\{t,M_2\}<M_i$ for any $i\geq3$, and $n>\frac{13}{3}\max\{M_i\colon i\geq 3\}+\log_{q}(13\cdot2^{10/3})$. Then $f_2(x)\in\mathcal{L}_{n,q}$ is scattered of index $t$.
\end{theorem}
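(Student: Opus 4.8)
The plan is to mimic the inductive strategy of Theorem~\ref{prop:f2_scattered}, which handled the index-zero case, and transport it to the positive-index setting using Proposition~\ref{Prop:f_monomio_index} as the base case. First I would record the structural consequences of Proposition~\ref{prop:posindexord}: since $\underline f$ is a Moore polynomial set of index $t$ with $f_2$ separable, the minimal degrees $m_2,\ldots,m_k$ (after reordering so that $\sigma(1)=2$, i.e.\ $f_2$ separable) form an arithmetic progression, so in particular $\max\{t,M_2\}<M_i$ for all $i\ge 3$ is a genuine separation hypothesis that keeps $f_2$ as the ``low-degree'' separable generator while the $g_i=f_i$ have strictly larger top degree. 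The goal is to show $f_2$ is scattered of index $t$, equivalently that $(x^{q^t},f_2(x))$ is a Moore polynomial set of index $t$ in the appropriate sense.

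The core of the argument is an induction on $i\in\{3,\ldots,k\}$ of exactly the shape used in Theorem~\ref{prop:f2_scattered}: I would prove that if $f_2$ is \emph{not} scattered of index $t$, then the truncated hypersurface $\mathcal U_{\underline h}$ with $\underline h=(x^{q^t},f_2(x),\ldots,f_i(x))$ has a non-repeated $\fqn$-rational absolutely irreducible component not contained in $\mathcal V$. For the base case $i=3$ this is precisely Proposition~\ref{Prop:f_monomio_index} (or rather its contrapositive, extracted from the tangent-cone analysis producing the factor $X_2-\mu X_1$). For the inductive step, I would apply the projectivity $\varphi\colon(X_1:\cdots:X_i:T)\mapsto(T:X_2-X_1:X_3:\cdots:X_i:X_1)$ sending the point $(1:1:0:\cdots:0)$ on $\mathcal U_{\underline f}$ to the origin, compute that the tangent cone to the image $\mathcal U_{\underline f}'$ at $O$ is governed by the determinant $F_{\underline g}^*$ with $\underline g=(x^{q^t},f_2,\ldots,f_{i-1})$, and then invoke the induction hypothesis on $\underline g$ together with Lemma~\ref{lemma:bello} to pass the irreducible component from the tangent cone back up to $\mathcal U_{\underline h}$. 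Finally, taking $i=k$ and using \cite[Corollary 7]{SLAVOV201760} to produce an affine $\fqn$-rational point of $\mathcal W$ off $\mathcal V$, I would contradict Theorem~\ref{Th:Hyper}, forcing $f_2$ to be scattered of index $t$.

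The main obstacle, and the place where the positive-index case genuinely differs from the index-zero induction, is the bookkeeping in the tangent-cone computation: in Theorem~\ref{prop:f2_scattered} one exploits that $f_1(x)=x$ is the multiplicative identity, which makes the minors telescope cleanly and lets $f_1(1,X_1)=1$ collapse the top row. With $f_1(x)=x^{q^t}$, the homogenizing variable $X_1$ enters with the extra factor $X_1^{q^t-1}$ (exactly as seen in the expression $-\lambda^{q^t}f_2(X_2,X_1)X_1^{q^t-1}+f_2(\lambda)X_2^{q^t}X_1^{q^{M_2}-1}$ of Proposition~\ref{Prop:f_monomio_index}), so I must verify that these shifted exponents do not spoil either the identification of the lowest-degree form of the tangent cone with $F_{\underline g}$ or the separability that guarantees the component $X_2-\mu X_1$ is non-repeated. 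Concretely, the hard part is checking that after dehomogenizing with respect to $X_1$ the tangent cone to $\mathcal U_{\underline f}'$ really recovers $F_{\underline g}(X_2,\ldots,X_i)$ up to the harmless power of $X_1$, so that the induction hypothesis applies verbatim.

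A secondary point requiring care is the degree comparison at the last step: I would need the degree of the putative component $\mathcal Z$ of $\mathcal W$ to satisfy the numerical threshold in \cite[Corollary 7]{SLAVOV201760} against $\deg\mathcal V=q^2+q+1$, which is where the hypotheses $q>5$ and $n>\tfrac{13}{3}\max\{M_i:i\ge3\}+\log_q(13\cdot 2^{10/3})$ are consumed, exactly as in the index-zero case. Since these bounds are inherited unchanged from Theorem~\ref{prop:f2_scattered}, I expect no new difficulty there, and the whole proof should reduce to the sentence that the argument is identical to that of Theorem~\ref{prop:f2_scattered} once Proposition~\ref{Prop:f_monomio_index} supplies the base case and the tangent-cone identification is checked.
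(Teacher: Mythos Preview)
Your proposal is correct and follows exactly the paper's own approach: induction on $i\in\{3,\ldots,k\}$ with base case supplied by the proof of Proposition~\ref{Prop:f_monomio_index}, inductive step via the same projectivity $\varphi$ and tangent-cone argument as in Theorem~\ref{prop:f2_scattered} combined with Lemma~\ref{lemma:bello}, and conclusion by \cite[Corollary~7]{SLAVOV201760} with Theorem~\ref{Th:Hyper}. Your concern about the $x^{q^t}$ bookkeeping is legitimate but resolves cleanly---since $f_1(X_j,X_1)=X_j^{q^t}$ and the hypothesis $\max\{t,M_2\}<M_i$ ensures the $(1,i)$-minor still has strictly smallest degree, the tangent cone dehomogenized in $X_1$ is again $F_{\underline g}(X_2,\ldots,X_i)$ with $\underline g=(x^{q^t},f_2,\ldots,f_{i-1})$; the paper simply calls this ``analogous'' without spelling it out.
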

\begin{proof}
It can be proved by finite induction on $i\in\{3,\ldots,k\}$ that, if $\underline{h}=(x^{q^t},f_2(x),\ldots,f_i(x))$ and $f_2(x)\in\mathcal{L}_{n,q}$ is not scattered of index $t$, then the hypersurface $\mathcal{U}_{\underline{h}}$ has a non-repeated $\fqn$-rational absolutely irreducible component not contained in $\mathcal{V}$. The base $i=3$ is in the proof of Proposition \ref{Prop:f_monomio_index}. For $i>3$, the argument is analogous to the one in the proof of Theorem \ref{prop:f2_scattered}.
The claim then follows again by using \cite[Corollary 7]{SLAVOV201760}.
\end{proof}

Recalling the correspondence between Moore polynomial sets and MRD codes described in Corollary \ref{cor:equiv}, we finally obtain Main Theorem as a consequence of Theorem \ref{Prop:h=2}, Theorem \ref{prop:f2_scattered}, Proposition \ref{Prop:f_monomio_index} and Theorem \ref{Th:Main2}.

Note that, if the hypothesis of $n$ being large enough in the aforementioned results are incorporated in the assumptions of Main Theorem, then the exceptionality of the MRD code $\C\subset\mathcal{L}_{n,q}$ can be dropped, as well as the exceptionality of the scattered property for $f_2(x)\in\mathcal{L}_{n,q}$.

\section{Known examples of Moore polynomial sets}

This section is devoted to the description of the known examples of Moore polynomial sets corresponding to inequivalent $\fqn$-linear MRD codes; see Table \ref{kMPS}. The only known examples of exceptional Moore polynomial sets are the first two in Table \ref{kMPS}.

Let $b \colon \mathcal{L}_{n,q}\times \mathcal{L}_{n,q}\rightarrow \mathbb{F}_q$ be the bilinear form given by $ b(f,g)=\mathrm{Tr}_{q^n/q}\left( \sum_{i=0}^{n-1} a_ib_i \right),$
where $\mathrm{Tr}_{q^n/q}(x)=\sum_{i=0}^{n-1}x^{q^i}$, $f(x)=\sum_{i=0}^{n-1}a_ix^{q^i}, g(x)=\sum_{i=0}^{n-1}b_ix^{q^i} \in \mathcal{L}_{n,q}$.
The \emph{Delsarte dual code} of a rank metric code $\mathcal{C}\subseteq \mathcal{L}_{n,q}$ is
\[ \mathcal{C}^\perp=\{ f(x) \in \mathcal{L}_{n,q} \colon b(f,g)=0,\,\, \text{for all}\,\, g(x) \in \mathcal{C} \}.  \]
Recall that the  Delsarte dual code of an MRD code, having minimum distance greater than one, is an MRD code; see e.g. \cite{delsarte1978bilinear,gabidulin1985theory}. This yields new examples of Moore polynomial sets; see lines 4,6,8,10,12,14 in Table \ref{kMPS}.

\tabcolsep=0.2 mm
{\small
\begin{longtable}{|c|c|c|c|c|}
\caption{Known examples of Moore polynomial sets}\label{kMPS}\\
 \hline
$n$ & $k$ & $f_1(x),\ldots,f_k(x)$ & \mbox{conditions} & \mbox{references} \\
\endfirsthead
\hline
$n$ & $k$ & $f_1(x),\ldots,f_k(x)$ & \mbox{conditions} & \mbox{references} \\
\endhead
\hline
& & $x,x^{q^s},\ldots,x^{q^{s(k-1)}}$ & $\gcd(s,n)=1$ & \cite{delsarte1978bilinear,gabidulin1985theory,kshevetskiy2005new} \\ \hline
 & & $x^{q^s},\ldots,x^{q^{s(k-1)}},x+\delta x^{q^{sk}}$ & \!\!$\begin{array}{cc} \gcd(s,n)=1,\\ \mathrm{N}_{q^n/q}(\delta)\neq (-1)^{nk}\end{array}$ \!\!& \cite{MR3543528,lunardon2018generalized}\\ \hline
$2t$ & $2$ & \begin{tabular}{c}$x$,\\$x^{q^s}+x^{q^{s(t-1)}}+\delta^{q^t+1}x^{q^{s(t+1)}}+\delta^{1-q^{2t-1}}x^{q^{s(2t-1)}}$\end{tabular} &\!\! $\begin{array}{cc} q \hspace{0.1cm} \text{odd}, \\ \mathrm{N}_{q^{2t}/q^t}(\delta)=-1,\\ \gcd(s,n)=1 \end{array}$ & \cite{MR4173668,longobardi2021linear,longobardi2021large,NPZ,zanella2020vertex}\\ \hline
$2t$ & $2t-2$ & $\begin{array}{cccc} x^{q^{si}} \colon i \notin\{0,1,t-1,t+1,2t-1\},\\ h_1(x)=x^{q^s}-x^{q^{s(t-1)}},\\ h_2(x)=\delta^{q^t+1}x^{q^s}-x^{q^{s(t+1)},}\\ h_3(x)=\delta^{1-q^{2t-1}}x^{q^s}-x^{q^{s(2t-1)}} \end{array}$ & $\begin{array}{cc} q \hspace{0.1cm} \text{odd}, \\ \mathrm{N}_{q^{2t}(q^t}(\delta)=-1,\\ \gcd(s,n)=1 \end{array}$ & \cite{MR4173668,longobardi2021linear,longobardi2021large,NPZ,zanella2020vertex}\\ \hline
$6$ & $2$ & $x,x^q+\delta x^{q^{4}}$  &  $\begin{array}{cc} q>4, \\ \text{certain choices of} \, \delta \end{array}$ & \cite{csajbok2018new,bartoli2020conjecture,polverino2020number} \\ \hline
$6$ & $4$ & $x^q, x^{q^2}, x^{q^4},x-\delta^{q^5} x^{q^{3}}$  &  $\begin{array}{cc} q>4, \\ \text{certain choices of} \, \delta \end{array}$ & \cite{csajbok2018new,bartoli2020conjecture,polverino2020number} \\ \hline
$6$ & $2$ & $x,x^{q}+x^{q^3}+\delta x^{q^5}$ & $\begin{array}{cccc}q \hspace{0.1cm} \text{odd}, \, \delta^2+\delta =1 \end{array}$
 & \cite{csajbok2018new2,marino2020mrd} \\ \hline
 $6$ & $4$ & $x^q,x^{q^3},x-x^{q^2},x^{q^4}-\delta x$ & $\begin{array}{cccc}q \hspace{0.1cm} \text{odd}, \, \delta^2+\delta =1 \end{array}$
 & \cite{csajbok2018new2,marino2020mrd} \\ \hline
$7$ & $3$ & $x,x^{q^s},x^{q^{3s}}$ & $\begin{array}{cc} q \hspace{0.1cm} \text{odd}, \, \gcd(s,7)=1\end{array}$ & \cite{csajbok2020mrd} \\ \hline
$7$ & $4$ & $x,x^{q^{2s}},x^{q^{3s}},x^{q^{4s}}$ & $\begin{array}{cc} q \hspace{0.1cm} \text{odd}, \, \gcd(s,n)=1\end{array}$ & \cite{csajbok2020mrd} \\ \hline
$8$ & $3$ & $x,x^{q^s},x^{q^{3s}}$ & $\begin{array}{cc} q \equiv 1 \pmod{3},\\ \gcd(s,8)=1\end{array}$  & \cite{csajbok2020mrd} \\ \hline
$8$ & $5$ & $x,x^{q^{2s}},x^{q^{3s}},x^{q^{4s}},x^{q^{5s}}$ & $\begin{array}{cc} q \equiv 1 \pmod{3},\\ \gcd(s,8)=1 \end{array}$ & \cite{csajbok2020mrd} \\ \hline
$8$ & $2$ & $x,x^{q}+\delta x^{q^5}$ & $\begin{array}{cc} q\,\text{odd}, \, \delta^2=-1\end{array}$ & \cite{csajbok2020mrd} \\ \hline
$8$ & $6$ & $x^q,x^{q^2},x^{q^3},x^{q^5},x^{q^6},x-\delta x^{q^4}$ & $\begin{array}{cc} q\,\text{odd}, \, \delta^2=-1\end{array}$ & \cite{csajbok2020mrd} \\ \hline
\end{longtable}}

\section{Conclusions and open problems}

In this paper we introduce the notion of \emph{exceptional} linear maximum rank distance codes of a given index, which naturally extends the notion of exceptionality for a scattered polynomial in the rank metric framework.
We then classify those of index $0$, and prove that those of positive index contain an exceptional scattered polynomial of the same index.

We list a couple of open problems related to the obtained results.

\begin{itemize}
    \item Under the assumptions of Proposition \ref{Prop:f_monomio_index} or Theorem \ref{Th:Main2}, for $n$ large enough,  one may conjecture that  
  Moore polynomial sets of positive index do not exist. 
    Whereas, relaxing the assumption $\max\{t,M_2\}<M_i$ for every $i\geq 3$, 
    one should include also the second example listed in Table \ref{kMPS}, that is the one corresponding to generalized twisted Gabidulin codes. 
    However, a new approach seems to be needed. 
   \item A complete classification of exceptional scattered polynomials could yield to more precise results on the asymptotics of Moore polynomial sets of positive index and hence of  $\fqn$-linear MRD codes in $\mathcal{L}_{n,q}$.
\end{itemize}

\section*{Acknowledgments}
This research  was supported by the Italian National Group for Algebraic and Geometric Structures and their Applications (GNSAGA - INdAM).
The second author is funded by the project ``Attrazione e Mobilità dei Ricercatori'' Italian PON Programme (PON-AIM 2018 num. AIM1878214-2). The second and the third authors are supported by the project ``VALERE: VAnviteLli pEr la RicErca" of the University of Campania ``Luigi Vanvitelli''.

\bibliographystyle{acm}
\bibliography{biblio.bib}
\end{document}